\documentclass[aps,prl,reprint,superscriptaddress,longbibliography]{revtex4-2}

\usepackage{amsmath,amssymb,amsthm}
\usepackage{bm}
\usepackage[scr=boondox,  % heavily sloped
            cal=esstix]   % slightly sloped
           {mathalpha}
\usepackage{graphicx}
\usepackage{xcolor}
\usepackage[colorlinks=true, linkcolor=blue, citecolor=blue, urlcolor=blue]{hyperref}
\usepackage{ulem}

\newcommand{\ii}{\mathrm{i}}

\newcommand{\Tr}{\operatorname{Tr}}

\newtheorem{theorem}{Theorem}
\newtheorem{corollary}[theorem]{Corollary}

\begin{document}

\title{Scalable simulation of non-Markovian quantum transport by stochastic-phase bath reduction}

\author{Zhen Huang}
\thanks{Contact author: zhuang@flatironinstitute.org}
\affiliation{Department of Mathematics, University of California, Berkeley, CA 94720, USA}
\affiliation{Center for Computational Quantum Physics, Flatiron Institute, New York, NY 10010, USA}
\affiliation{Center for Computational Mathematics, Flatiron Institute, New York, NY 10010, USA}
\author{Lin Lin}
\thanks{Contact author: lin@caltech.edu}
\affiliation{Department of Mathematics, University of California, Berkeley, CA 94720, USA}
\affiliation{Applied Mathematics and Computational Research Division, Lawrence Berkeley National Laboratory, Berkeley, CA 94720, USA}
\affiliation{Department of Computing and Mathematical Sciences, California Institute of Technology, Pasadena, CA 91125, USA}
\author{Pinchen Xie}
\thanks{Contact author: pinchenxie@lbl.gov}
\affiliation{Applied Mathematics and Computational Research Division, Lawrence Berkeley National Laboratory, Berkeley, CA 94720, USA}

\date{\today}

\begin{abstract}
Accurate simulations of non-Markovian quantum transport remain limited to small systems, and a key bottleneck is that each system site requires an independent set of quantum bath orbitals. We introduce the Stochastic Phase Algorithm (SPA), which replaces the independent local Gaussian baths in single-quasiparticle Holstein models by $R$ shared quantum baths with stochastic site-dependent phases. We prove that phase averaging reproduces the target two-point bath-correlation matrix for every $R$, while increasing $R$ systematically suppresses errors from higher-order cross-site correlations. We obtain a fixed-time $O(R^{-1})$ trace-norm error bound with a prefactor independent of system size and connectivity. This spatial compression applies directly to unitary baths and can be combined with coupled-Lindblad spectral compression [Phys. Rev. Lett. 136, 090403 (2026)], which represents each shared environment by a very small number of coupled, damped quantum modes fitted to its thermal bath correlation over the simulation window. Numerically, SPA converges against direct unitary benchmarks on a $3\times3$ lattice, enables full-state-vector quantum-bath dynamics on a $100\times100$ lattice, and, with $R=1$, closely reproduces near-exact benchmarks for exciton populations in bacteriochlorophyll aggregates and carrier mobility in rubrene.
\end{abstract}

\maketitle

The interplay between electronic and nuclear degrees of freedom governs the transport of quasiparticles in molecular systems, including polarons in organic semiconductors \cite{Coropceanu2007ChemRev,Troisi2011ChemSocRev} and excitons in photosynthetic complexes \cite{Ishizaki2009PNAS,Scholes2011NatChem}. Nuclear quantum effects (NQE) and electron-vibronic coupling (EVC) render such transport intrinsically an open quantum dynamical problem, motivating computational methods that retain environmental memory beyond Markovian master equations for the reduced electronic system.

Compared with generic open quantum dynamics, quantum transport in extended molecular systems is especially challenging for two reasons. First, each electronically active site can be coupled to its own structured vibronic environment. The resulting Hilbert-space dimension grows exponentially with system size~\footnote{Here, the Hilbert-space dimension refers to the direct-product representation. Tensor-network methods may compress this space when the relevant state or influence functional has a favorable low-rank or entanglement structure.}. Second, EVC is often comparable to coherent tunneling, placing transport in the intermediate-coupling regime where neither weak- nor strong-coupling theories are generally controlled~\cite{Coropceanu2007ChemRev, Troisi2011ChemSocRev,Fratini2016, oberhofer2017charge,shuai2020applying}.

Extended systems with intermediate EVC thus provide a stringent testbed for non-perturbative open-system methods~\footnote{We refer readers to Ref.~\cite{lacroix2026tensor} for a comprehensive review of non-perturbative methods.}. The cost of systematically improvable methods, including influence-functional~\cite{Makri1995QUAPI,Strathearn2018TEMPO,ye2021constructing}, hierarchy~\cite{Tanimura2020HEOM}, and effective-environment approaches~\cite{Prior2010TEDOPA,Tamascelli2018Pseudomodes, DAMPF2019}, grows with environmental memory, bath complexity, and system-bath entanglement~\cite{lacroix2026tensor}. Bath compression~\cite{thoenniss2025efficient,huang2026coupled}, tensor-network compression~\cite{Prior2010TEDOPA,DAMPF2019,ye2021constructing}, and adaptive localization~\cite{varvelo2021formally,varvelo2023formally} can extend accessible system sizes, yet controlled simulations of long-time, delocalized transport on large multidimensional networks remain challenging. Mixed quantum-classical methods achieve greater scalability by representing the environments semiclassically~\cite{ehrenfest1927bemerkung,Tully1990FSSH,wang2011mixed,jiang2016nuclear, Xie2026DIQCD}. However, they do not explicitly retain system-bath entanglement, generally lack systematic convergence to exact quantum dynamics, and may compromise detailed balance~\cite{Parandekar2006DetailedBalance,gu2019can,Amati2023DetailedBalance}.

In this Letter, we introduce the Stochastic Phase Algorithm (SPA)~\footnote{ We use this term to distinguish it from the well-known random-phase approximation (RPA) in perturbation theory.}, a systematically improvable method that reduces the spatial multiplicity of vibronic environments while retaining quantum system-bath dynamics. We consider single-quasiparticle transport on an $N_{\text{s}}$-site tight-binding network of arbitrary connectivity, where each site is coupled to an identical but independent Gaussian bosonic bath [Fig.~\ref{fig:overview}]. This homogeneous environment defines a Holstein-type model~\cite{Holstein1959PartI,Holstein1959PartII}, widely used for organic materials composed of equivalent or similar molecular units.

Unlike existing tensor-network approaches~\cite{Prior2010TEDOPA, DAMPF2019} that retain site-resolved environments, SPA replaces the $N_{\text{s}}$ local baths with a small number ($R$) of shared quantum baths dressed by static, stochastic site-dependent phases. Each realization retains the local bath spectrum and supports system-bath entanglement, while phase averaging preserves complete positivity. We prove that phase averaging reproduces the full matrix of two-point bath correlation functions (BCFs) for every $R$, while errors in higher-order cross-site correlations are suppressed as $1/R$.
We derive a fixed-time $O(R^{-1})$ trace-norm error bound for the phase-averaged reduced density operator, with a prefactor independent of $N_{\text{s}}$ and network connectivity. Thus, for a given time and accuracy, enlarging the network does not require more bath copies.

SPA can also incorporate pseudomode methods for spectral bath compression~\cite{park2024quasi,huang2026provably,thoenniss2025efficient}.
Here, we use the recently developed coupled-Lindblad compression~\cite{huang2026coupled}, which represents each shared bath by a few damped quantum modes while retaining BCFs. SPA reduces the number of spatial bath copies, while coupled-Lindblad compression reduces the modes within each copy.
Their combination connects first-principles or experimentally validated bath models to mesoscale transport phenomena in organic electronics and photosynthetic assemblies on scales previously accessible mainly with semiclassical methods~\cite{giannini2022exciton} or quantum-bath methods whose favorable scaling relies on quasiparticle localization, and whose mesoscale demonstrations have been limited to overdamped vibronic environments~\cite{varvelo2021formally,varvelo2023formally}.

\begin{figure}[t]
    \centering
    \includegraphics[width=0.85\linewidth]{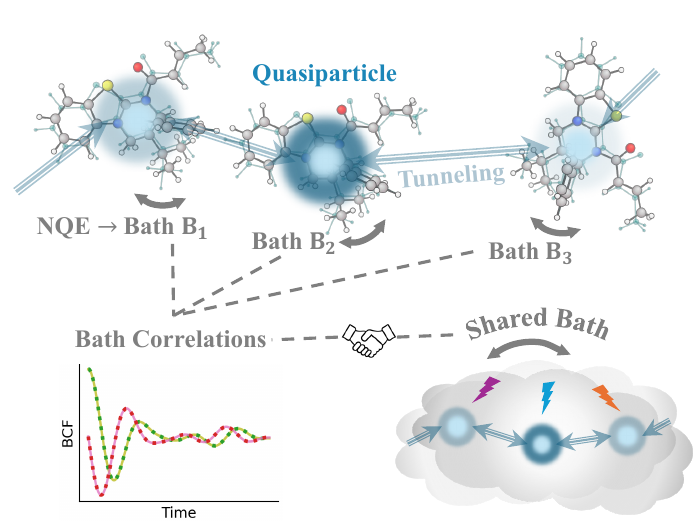}
    \caption{Schematics of quantum transport in organic materials and the SPA approach.}
    \label{fig:overview}
\end{figure}

\vspace{1em}
\textit{Model.--}
We focus on quantum transport described by an $N_{\text{s}}$-site Holstein Hamiltonian, $\hat H=\hat H_{\text{s}}+\hat H_{\text{b}}+\hat H_{\text{sb}}$. 
The tight-binding Hamiltonian is 
$\hat H_{\text{s}}=\sum_{i=1}^{N_{\text{s}}}U_i |i\rangle\langle i|+\sum_{\substack{i,j=1\\i\neq j}}^{N_{\text{s}}}V_{ij}|i\rangle\langle j|$,
where $U_i$ are site energies and $V_{ji}=V_{ij}^*$ are hopping amplitudes on a network of arbitrary connectivity.
The vibronic environment of each site consists of $n$  bosonic modes, $\hat H_{\text{b}}=\sum_{i=1}^{N_{\text{s}}}\sum_{\alpha=1}^{n}\omega_{\alpha}\hat b_{i\alpha}^\dagger \hat b_{i\alpha}$, where $\hat b_{i\alpha}^\dagger$ creates a vibronic excitation in mode $\alpha$ attached to site $i$. The system-bath coupling is local and diagonal,
\begin{small}
\begin{equation}\label{eq:ham-eb}
    \hat H_{\text{sb}}=\sum_{i=1}^{N_{\text{s}}}  |i\rangle\langle i| \otimes \sum_{\alpha=1}^{n}g_{\alpha}\omega_{\alpha}(\hat b_{i\alpha}^\dagger+\hat b_{i\alpha}),
\end{equation}
\end{small}with dimensionless coupling $g_{\alpha}$.
The local reorganization energy $\lambda=\sum_{\alpha=1}^{n}g_\alpha^2\omega_\alpha$ measures the energetic stabilization due to EVC. 
The intermediate-coupling regime occurs when $\lambda$ is comparable to the dominant hopping amplitude. In the Supplemental Material (SM)~\cite{SupplementalMaterial}, we list typical materials that fall into this regime~\cite{Coropceanu2007ChemRev, Yamamoto2010DNTT, Pandey2023BTBT, Gajdos2013PCBM, Ishizaki2009PNAS, kundu2020real, Zhu2025LHCII, Blau2018PC645, Raszewski2008PSII, Kreisbeck2016PSII}.

We set $\hbar=1$ and $\beta=(k_{\mathrm B}T)^{-1}$.
We assume the initial state is a tensor product of a system state $\hat\rho_{\text{s}}(0)$ and a thermal environmental state $\hat\rho_{\text{b},\beta}\propto\exp(-\beta \hat H_{\text{b}})$. We denote the exact reduced system density operator by $\hat\rho_{\text{s}}(t)$. The environmental influence on $\hat\rho_{\text{s}}(t)$ is encoded in the BCF $C_{ij}(t)$ between sites $i$ and $j$, which can be obtained from first-principles calculations or from experimental vibronic spectral densities~\cite{Giustino2017,Valleau2012,Pachon2014}.
For independent Gaussian baths, $C(t)$ is diagonal:
$C_{ij}(t)=c(t)\delta_{ij}$, 
with
$c(t)=\sum_{\alpha=1}^{n} g_\alpha^2\omega_\alpha^2
[\coth(\beta\omega_\alpha/2)\cos(\omega_\alpha t)-\ii\sin(\omega_\alpha t)]$.
For a continuous bath with spectral density $J(\omega)$, this discrete sum is replaced by an integral over $\omega$. 

In this work, we focus on reduced system observables
$O(t) = \Tr[\hat\rho_{\text{s}}(t)\hat O]$.
Of particular interest is the quasiparticle population on site $i$,
$P_i(t) = \Tr[\hat\rho_{\text{s}}(t)\,|i\rangle\langle i|]$.

\begin{figure*}[bt]
    \centering
    \includegraphics[width=0.9\linewidth]{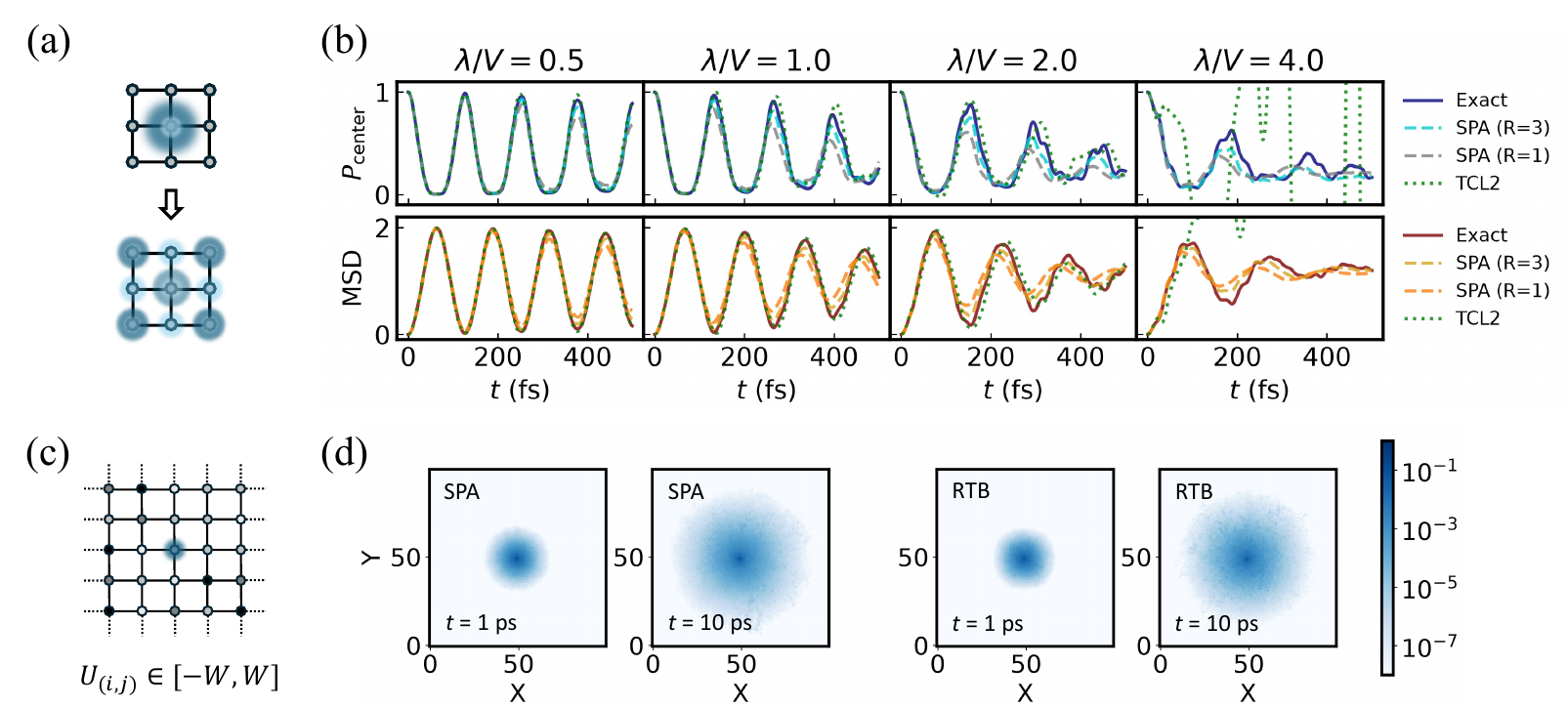}
    \caption{(a) Schematic of the population dynamics on the $3\times 3$ lattice. (b) $P_{\text{center}}$ and the MSD as functions of time, obtained from exact simulation, SPA, and TCL2. The statistical uncertainties in the SPA results are smaller than the plotted line width and are therefore not shown. (c) Schematic of the disordered site energies $U_{(X,Y)}$ and the localized initial state on the 2D lattice. $U_{(X,Y)}\in[-W,W]$ is uniformly sampled with $W=200~\mathrm{cm}^{-1}$ for each realization of random phases. (d) Snapshots of the site populations $P_{(\text{X,Y})}(t)$ on a logarithmic color scale obtained with $\lambda/V=8.0$.}
    \label{fig:method}
\end{figure*}

\vspace{1em}
 \textit{Method.--}
Instead of retaining $n$ bosonic modes per site, SPA creates $R$ copies of the local bath and couples them to all sites. The shared baths are given by
$\hat H_{\text{b}}^{(R)}=\sum_{a=1}^{R}\sum_{\alpha=1}^{n}
\omega_\alpha \hat b_{a\alpha}^\dagger \hat b_{a\alpha}$, 
with the phase-dressed coupling
\begin{small}
\begin{equation}\label{eq:ham-eb-eff}
\hat H_{\text{sb},\theta}^{(R)}
=
\sum_{i=1}^{N_{\text{s}}} |i\rangle\langle i|
\otimes
\sum_{a=1}^{R}\sum_{\alpha=1}^{n}
\frac{g_\alpha\omega_\alpha}{\sqrt R}
\left(
r_i^{(a)}\hat b_{a\alpha}^\dagger
+
r_i^{(a)*}\hat b_{a\alpha}
\right).
\end{equation}
\end{small}
Here $r_i^{(a)}=e^{\mathrm{i}\theta_i^{(a)}}$, with the phases $\theta_i^{(a)}$ sampled independently and uniformly from $[0,2\pi)$. The total Hamiltonian,
$\hat H_\theta^{(R)}=\hat H_{\text{s}}+\hat H_{\text{b}}^{(R)}+\hat H_{\text{sb},\theta}^{(R)}$, 
generates unitary dynamics, $\frac{\mathrm{d}}{\mathrm{d}t}\hat\rho_\theta^{(R)}(t)=-\mathrm{i} [\hat H_\theta^{(R)},\hat\rho_\theta^{(R)}(t)]$, with each shared bath initialized in the thermal state. Let $\hat\rho_{\text{s},\theta}^{(R)}(t)=\Tr_{\text{b}^{(R)}}\hat\rho_\theta^{(R)}(t)$ be the reduced state for a fixed phase, and $\hat\rho_{\text{s}}^{(R)}(t)=\mathbb E_\theta[\hat\rho_{\text{s},\theta}^{(R)}(t)]$ its phase average.
The phase average restores the two-point BCFs of the original independent baths. Specifically,
$\mathbb E_\theta\ [
R^{-1}\sum_{a=1}^{R}
r_i^{(a)}r_j^{(a)*}
]=\delta_{ij}$,  and hence
$\mathbb E_\theta\ [C_{ij,\theta}^{(R)}(t)] =c(t)\delta_{ij}$.

Thus, SPA always reproduces the two-point BCFs, but this alone does not guarantee exact dynamics because the phase-averaged Gaussian baths need not remain Gaussian. For example, we define $G_{ij}^{(R)}=R^{-1}\sum_{a=1}^R r_i^{(a)}r_j^{(a)*}$. For $i\neq j$, we have $\mathbb E_\theta[G_{ij}^{(R)}]=0$, but $\mathbb E_\theta[G_{ij}^{(R)}G_{ji}^{(R)}]=1/R$, which contributes to fourth-order cross-site errors. Our main theoretical result is that (see the End Matter) the error in SPA is systematically controlled as
\begin{equation}
\left\|\hat\rho_{\mathrm s}(t)-\hat\rho_{\mathrm s}^{(R)}(t)\right\|_{\mathrm{tr}} \leq A_t / R.
\end{equation}
Here $A_t$ depends on time and the BCFs, but not on $N_{\text{s}}$ or connectivity. The error in a bounded system observable $\hat O$ is at most $\|\hat O\|_\infty A_t/R$. This analysis takes all high-order errors into consideration. At fixed time and target accuracy, the number of shared baths required by SPA is independent of system size.
However, this size-independent bound generally does not extend to many-body settings where the relevant electronic sector grows with system size.

So far, SPA has reduced the spatial multiplicity of the environments but not the number of modes $n$ in a shared bath. When $n$ is large, or when the shared bath has a continuous spectral density, further dimension reduction~\cite{park2024quasi,huang2026coupled,huang2026provably,muller2026one} is needed. This separate bottleneck is addressed by coupled-Lindblad compression, which replaces an $n$-mode unitary bath by $n_{\mathrm b}$ damped quantum modes that exchange excitations~\cite{huang2026coupled}. For continuous baths at fixed BCF accuracy, the mode count in standard unitary discretizations grows linearly with the simulation duration, whereas coupled-Lindblad compression can achieve polylogarithmic growth. We fit a bath Hamiltonian matrix $K=K^\dagger$, a damping matrix $\Gamma=\Gamma^\dagger\succeq0$, and a coupling vector $\epsilon$ so that $\tilde c(t)=\epsilon^\dagger e^{(-\mathrm{i}K-\Gamma)t}\epsilon$ approximates $c(t)$ for $0\leq t\leq\tau$~\footnote{All coupled-Lindblad fits reported in this work were performed using \texttt{RealTimeBath} package~\cite{RealTimeBath}.}. The auxiliary modes are initialized in their vacuum, with the physical temperature encoded in $K, \Gamma$ and $\epsilon$. Although the enlarged system obeys a Lindblad equation, the retained modes carry memory and the reduced electronic dynamics remains non-Markovian.
SPA naturally adapts to this reduction, and the shared-bath Hamiltonian becomes $\hat H_{\text{b}}^{(R)}=\sum_{a=1}^{R}\sum_{k,l=1}^{n_{\text{b}}}
K_{kl}\hat b_{ak}^\dagger\hat b_{al}$.
The system-bath coupling retains the form of Eq.~\eqref{eq:ham-eb-eff} under $n\rightarrow n_{\text{b}}$ and $\alpha\rightarrow k$, with $\epsilon_k$ multiplying $\hat b_{ak}^\dagger$ and $\epsilon_k^*$ multiplying $\hat b_{ak}$. The phases $\theta_i^{(a)}$ are sampled in the same way. The resulting dynamics are governed by the Lindblad equation
$\frac{\mathrm{d}}{\mathrm{d}t}\hat\rho_{\theta}^{(R)}(t)=-\mathrm{i}[\hat H_{\theta}^{(R)},\hat\rho_{\theta}^{(R)}(t)]+\mathcal D(\hat\rho_{\theta}^{(R)}(t))$,
where~\footnote{In the coupled-Lindblad convention, $\Gamma$ differs by a factor of two from the standard Lindblad rate, so that $\tilde c(t)=\epsilon^\dagger e^{(-\mathrm{i}K-\Gamma)t}\epsilon$ contains no additional factor of two in the damping term.}
\begin{small}
\begin{equation}
\mathcal D(\hat\rho)
=
\sum_{a=1}^{R}\sum_{k,l=1}^{n_{\text{b}}}
\Gamma_{kl}
\left(
2\hat b_{al}\hat\rho\hat b_{ak}^\dagger
-
\{\hat b_{ak}^\dagger\hat b_{al},\hat\rho\}
\right).
\end{equation}
\end{small}
We can simulate the Lindblad dynamics using the Monte Carlo wavefunction propagator~\cite{PlenioKnight1998} and average jointly over Monte Carlo trajectories and static phases~\cite{SupplementalMaterial}.

Finally, as an example of SPA’s scaling advantage in practical simulations, suppose that each bosonic mode is truncated to $d$ levels. The original system–bath state dimension is $N_{\text{s}} d^{nN_{\text{s}}}$. SPA reduces it to $N_{\text{s}} d^{nR}$ for unitary shared baths, and to $N_{\text{s}} d^{n_{\mathrm b}R}$ after spectral compression. The numerical benchmarks below show that $R=1$ can already yield accurate transport observables.

\begin{figure*}[t]
    \centering
    \includegraphics[width=0.9\linewidth]{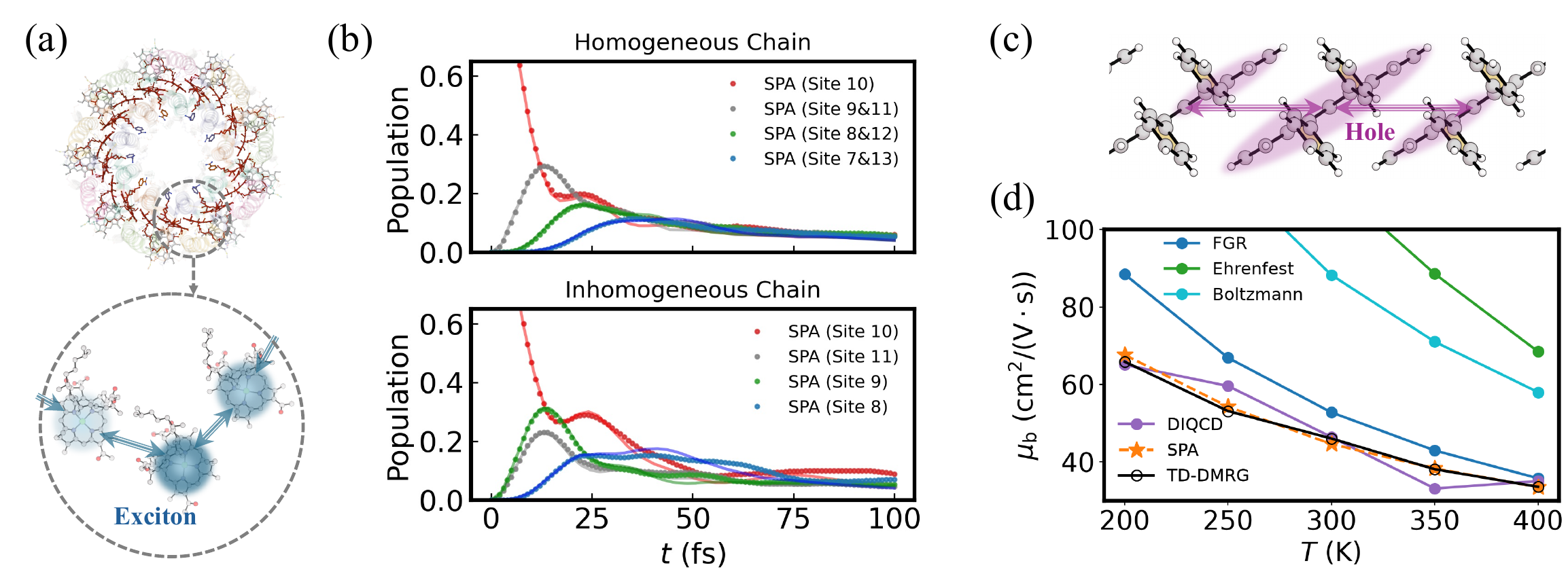}
    \caption{(a) The experimental crystal structure of LH2 B800-850~\cite{papiz2003structure,pdb1nkz} and a schematic of exciton migration within the B850 ring.  (b) Exciton population dynamics for homogeneous and inhomogeneous (staggered) 19-site BChl chains at $T=300$ K. SPA predictions are reported as dots, while MPI predictions are reported as lines.
    (c) The $bc$ plane of a rubrene crystal. Purple shading indicates a localized hole~\cite{ishii2017charge}. (d) $b$-axis carrier mobility as a function of temperature. In (b) and (d), statistical uncertainties of the SPA results are smaller than the marker size and are not shown.
    }
    \label{fig:app}
\end{figure*}

\textit{Systematic convergence and scalability in a 2D Holstein model.--}
We demonstrate SPA with a 2D Holstein model on an open $L\times L$ square lattice with nearest-neighbor hopping $V=100~\mathrm{cm}^{-1}$. Each site is coupled to a bosonic mode with $\omega_1=1000~\mathrm{cm}^{-1}$, and the reorganization energy $\lambda=g_1^2\omega_1$ is varied through $g_1$.

We demonstrate the accuracy of SPA on an $L\times L$ lattice with $L=3$ [Fig.~\ref{fig:method}(a)], which allows exact unitary benchmarking. We assume uniform site energy, $T=300$ K, and an initial state localized at the central site. For several values of $\lambda$, we compute the central-site population $P_{\mathrm{center}}$ and the mean-squared displacement,
$\mathrm{MSD}(t)=\Tr[\hat\rho_{\text{s}}(t) (\hat X^2+\hat Y^2)]-\bigl(\Tr[\hat\rho_{\text{s}}(t)\hat X]\bigr)^2-\bigl(\Tr[\hat\rho_{\text{s}}(t)\hat Y]\bigr)^2$, where $\hat X$ and $\hat Y$ are the site-coordinate operators. Fig.~\ref{fig:method}(b) compares the exact results with SPA and second-order time-convolutionless perturbation theory (TCL2)~\cite{Fetherolf2017TCL2}. The SPA error grows with $\lambda$ but decreases systematically with $R$. TCL2 is comparably accurate only at weaker coupling; for $\lambda/V>2$, it breaks down and produces nonphysical dynamics that violate positivity. In contrast, the completely positive SPA is robust throughout the intermediate-coupling regime.

For scalability, we simulate a particle initially localized at the center of an $L\times L$ lattice with $L=100$ and disordered site energies [Fig.~\ref{fig:method}(c)].
Because no systematically converged benchmark is feasible at this scale, we compare SPA ($R=1$) against a renormalized tight-binding (RTB) model that incorporates EVC through the Lang--Firsov hopping $V'=V\exp[-g_1^2\coth(\beta\omega_1/2)]$. We choose $\lambda/V=8$, for which the RTB model provides a qualitative strong-coupling reference. The population snapshots in Fig.~\ref{fig:method}(d) show qualitative agreement between SPA and RTB. Both models exhibit slow diffusion over $10~\mathrm{ps}$, resulting from strong-coupling suppression of transport and disorder frustration.
For comparison, state-of-the-art tensor-network simulations~\cite{DAMPF2019} that retain site-resolved environments have demonstrated long-time dynamics only for tens of sites.

These benchmarks suggest that SPA with $R=1$ is a practical choice for balancing accuracy and efficiency. It enables full-state-vector simulations of realistic mesoscale systems. Realistic vibronic environments, however, are multimodal and substantially more complex than the single-mode bath considered above. The ratio $\lambda/V$ alone is insufficient to predict the accuracy of SPA. We therefore assess SPA with $R=1$ for transport in representative organic materials.

\textit{Exciton migration in a photosynthetic complex.--}
Frenkel-exciton migration in photosynthetic antenna complexes is commonly described by Holstein models, making bacteriochlorophyll (BChl) aggregates a natural testbed for SPA. We consider the LH2 complex, which contains 9 BChl \textit{a} molecules in the B800 ring and 18 in the B850 ring [Fig.~\ref{fig:app}(a)]. The B850 ring is the primary channel for exciton transport.

For benchmarking, we use the homogeneous and staggered 19-site B850-chain models of Ref.~\cite{kundu2020real}, for which near-exact modular path-integral (MPI) results are available. Each site couples to 50 vibrational modes parameterized for \textit{Cereibacter sphaeroides}~\footnote{See Table II of Ref.~\cite{ratsep2011demonstration}; the parameters are reproduced in SM~\cite{SupplementalMaterial}}, with $\lambda=217.66~\mathrm{cm}^{-1}$; the Hamiltonian parameters are listed in the SM~\cite{SupplementalMaterial}. The exciton is initially localized at site 10.

For $T=300~\mathrm{K}$, we compress the 50-mode local bath into a shared, six-mode dissipative bath~\cite{SupplementalMaterial} with $\tau=100~\mathrm{fs}$, matching the timescale studied in Ref.~\cite{kundu2020real}.   Figure~\ref{fig:app}(b) compares the resulting $R=1$ SPA populations with the path integral approach.  SPA captures coherent spreading and vibrational damping in the homogeneous chain, as well as energy-biased redistribution in the staggered chain. Thus SPA retains both coherent and relaxational features of a structured, non-Markovian molecular bath. The bath-compression error is negligible over this time window, and the small deviations beyond $t\approx30~\mathrm{fs}$ are consistent with the finite-$R$ error.

\textit{Charge transport in an organic semiconductor.--}
Charge transport in high-mobility organic semiconductors often lies in the intermediate-coupling regime. The time-dependent density matrix renormalization group (TD-DMRG) method provides near-exact benchmarks for directional carrier mobilities in Holstein-type models~\cite{li2020finite,li2021general}, but its computational cost restricts simulations to small 1D chains~\cite{li2020finite} or quasi-2D ladders~\cite{li2021general}.
 
We benchmark SPA against TD-DMRG for hole transport along the high-mobility $b$ axis of rubrene [Fig.~\ref{fig:app}(c)]. We use the homogeneous 1D Holstein model of Refs.~\cite{jiang2016nuclear,li2020finite,Xie2026DIQCD}, parameterized using hybrid density-functional theory~\cite{jiang2016nuclear}. The nearest-neighbor hopping is $V=83~\mathrm{meV}$, and each site couples to nine vibrational modes [see SM~\cite{SupplementalMaterial}], giving $\lambda=73~\mathrm{meV}$. Bath compression is not necessary.

While TD-DMRG is restricted to equilibrium linear-response calculations for a few dozen sites, SPA can directly simulate unitary transport on a chain with $L=200$ sites. The hole is initially localized at the central site. The $b$-axis mobility is obtained from
$\mu_{\text{b}}(T)=\frac{eD^2}{2k_{\mathrm B}T}\lim\limits_{t\to\infty}
\frac{\mathrm{d}\,\mathrm{MSD}(t)}{\mathrm{d}t}$~\footnote{The asymptotic slope is extracted from the diffusive regime before finite-size effects.},
with intermolecular spacing $D=7.19~\text{\AA}$.  
In Fig.~\ref{fig:app}(d), we compare the $R=1$ SPA prediction for $\mu_{\text{b}}(T)$ against predictions from TD-DMRG~\cite{li2020finite}, Fermi's golden rule (FGR)~\cite{li2020finite}, Boltzmann theory~\cite{li2020finite}, Ehrenfest dynamics, and data-informed quantum-classical dynamics (DIQCD)~\cite{Xie2026DIQCD}~\footnote{The setup of Ehrenfest and DIQCD simulations in this work differs from that in Ref.~\cite{Xie2026DIQCD}, where shorter propagation times, smaller bosonic cutoffs, and $D=7~\text{\AA}$ led to overestimated mobilities. With these parameters corrected, DIQCD agrees substantially better with TD-DMRG than previously reported.}.
Because $\lambda/V\simeq0.88$ places rubrene in the intermediate-coupling regime, neither weak-coupling Boltzmann theory nor strong-coupling FGR is reliable. In contrast, SPA closely reproduces the TD-DMRG mobility from $200$ to $400~\mathrm{K}$, whereas the Ehrenfest results exhibit systematic deviations. DIQCD is a semiclassical method that uses machine learning to reproduce selected quantum-bath statistics. Its prediction is close to the TD-DMRG result but exhibits nonmonotonic residual errors, consistent with the uncontrolled nature of the semiclassical approximation. This comparison therefore validates SPA for modeling long-time transport in the regime where coherent tunneling and strong NQE compete.

Beyond this case study, organic systems can exhibit strongly anisotropic tunneling that requires 2D or 3D transport models. Future work will explore such multidimensional transport.

\textit{Discussion.--}
SPA achieves systematically improvable accuracy by reducing the spatial multiplicity of quantum baths. Phase averaging restores the target BCFs, and the fixed-time finite-$R$ error bound is independent of system size. Thus, SPA avoids the exponential growth from replicated local bath Hilbert spaces. SPA preserves complete positivity, and the unitary thermal-bath construction recovers detailed balance in the Davies limit~\footnote{This follows by the standard weak-coupling argument, in which the thermal KMS relation imposes detailed balance on the Davies rates~\cite{Davies1974Markovian,Ding2025Efficient}.}.

Even with $R=1$, SPA accurately captures structured non-Markovian effects in BChl aggregates and reproduces the long-time carrier mobility of rubrene in the intermediate-coupling regime. Coupled-Lindblad compression reduces the number of modes per shared bath, providing a route to mesoscale transport simulations with structured quantum environments.

These bath reductions work with various representations of the system–bath state. We provide a Python package, \texttt{scalabath}~\cite{scalabath}, that implements GPU-parallel SPA simulations with a full state-vector representation and supports systems with thousands of molecular sites. Adopting a tensor-network representation~\cite{Vidal2004TEBD,WhiteFeiguin2004,Daley2004,Haegeman2011TDVP} of the system-bath state may provide further scalability. 

Beyond organic materials, SPA may model molecular-qubit arrays with spin-phonon coupling~\cite{lohaus2026experimental}. The framework may be extended to fermionic environments or to nonlocal EVC, including Holstein--Peierls models, through appropriately correlated stochastic phases and auxiliary-bath couplings.

\textit{Data and Code Availability.---}
All SPA simulations were conducted using the publicly available \texttt{scalabath} package hosted on GitHub~\cite{scalabath}. Coupled-Lindblad bath compression was performed with the \texttt{RealTimeBath} package, which is also publicly available on GitHub~\cite{RealTimeBath}. All results reported in this paper are reproducible using scripts that will be made publicly available via Zenodo.

\textit{Acknowledgments.---}
 We thank Mingyu Kang and Haoen Li for fruitful discussions. This work was supported in part by the Simons Targeted Grants in Mathematics and Physical Sciences on Moir\'e Materials Magic, and by the Applied Mathematics Program of the US Department of Energy (DOE) Office of Advanced Scientific Computing Research under contract number DE-AC02-05CH1123 (Z.H., L.L.). P. X. was supported by the Alvarez Fellowship of Lawrence Berkeley National Lab and DOE Advanced Scientific Computing Research (ASCR) Applied Mathematics program under Contract No. DE-AC02-05CH11231. This research used resources of the National Energy Research Scientific Computing Center (NERSC), a Department of Energy User Facility (project m5218-2026).

\nocite{Saad1992,Trotter1959,Suzuki1991,PlenioKnight1998}
\bibliography{references}
\clearpage  

\section*{End Matter}
\phantomsection
\label{sec:endmatter}
In this End Matter, we prove the system-size-independent $O(R^{-1})$ convergence of the phase-averaged SPA dynamics, first for unitary auxiliary baths and then for the coupled-Lindblad realization used in our simulations. Although the bosonic Hilbert spaces are infinite-dimensional, the Gaussian initial state guarantees absolute convergence of the series expansion used below~\cite{huang2024unified}, so the error bound is independent of the finite bosonic cutoff.

\begin{theorem}
For the exact Holstein model and the unitary $R$-bath SPA model defined in the main text, let $\hat\rho_{\text{s}}(t)$ and $\hat\rho_{\text{s}}^{(R)}(t)$ denote their respective reduced system states, with the latter averaged over the random phases. Then
\begin{equation}
\|\hat\rho_{\text{s}}(t)-\hat\rho_{\text{s}}^{(R)}(t)\|_{\rm tr} \leq  \frac{A_t}{R}
\end{equation}
where $A_t$ depends on $t$ and the BCF $c$ but is independent of $N_{\text{s}}$. For every bounded electronic observable \(O\), duality gives \(\left|\operatorname{Tr}[O(\rho_{\mathrm s}-\rho_{\mathrm s}^{(R)})]\right| \leq \|O\|_\infty A_t/R\).
\label{thm:main}
\end{theorem}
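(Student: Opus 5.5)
The plan is to expand both reduced dynamics in a Dyson series in the interaction picture with respect to $\hat H_{\text{s}}+\hat H_{\text{b}}$, apply Wick's theorem, and compare the two series term by term. Because the bath is Gaussian and the initial state is a product with thermal baths, the order-$2m$ term of $\hat\rho_{\text{s}}(t)$ is a sum over Keldysh-contour time orderings and over pairings of the $2m$ vertices. Each term is a product of $m$ BCFs $c(t_k-t_l)$ (or their conjugates) multiplying a superoperator built from $e^{-\ii \hat H_{\text{s}} s}$ and the coupling operators; absolute convergence of this expansion is the result of Ref.~\cite{huang2024unified} quoted above. For a fixed phase $\theta$, the SPA model has the same structure. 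The local coupling of the exact model, $\sum_i |i\rangle\langle i|\otimes \hat B_i$, is replaced by $R^{-1/2}\sum_a (\hat Z_a\otimes \hat B_a^\dagger+\hat Z_a^\dagger\otimes \hat B_a)$, where $\hat Z_a=\sum_i r_i^{(a)}|i\rangle\langle i|$ and $\hat B_a=\sum_\alpha g_\alpha\omega_\alpha \hat b_{a\alpha}$. Since each copy $a$ is Gaussian with the same BCF $c$, every Wick pairing now carries a bath label $a\in\{1,\dots,R\}$ and a factor $1/R$. So the order-$m$ term is an average over label assignments $(a_1,\dots,a_m)\in[R]^m$ of a contour expression in which the two paired vertices carry $\hat Z_{a_k}$ and $\hat Z_{a_k}^\dagger$ (in either order, depending on the creation/annihilation structure).

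The key observation is that the $\hat Z_a$ are diagonal unitaries, $\|\hat Z_a\|_\infty=1$. The site sums are therefore absorbed into bounded operators, and $N_{\text{s}}$ never appears explicitly. I would then split the label assignments into \emph{collision-free} ones ($a_1,\dots,a_m$ pairwise distinct) and the rest. For a collision-free assignment, the phase average factorizes over pairs. Each pair contributes the twirl $\mathbb E_\theta[\hat Z_a X \hat Z_a^\dagger \cdots]$, which, because the phases $\theta_i^{(a)}$ are i.i.d.\ uniform, reproduces exactly $\sum_i |i\rangle\langle i| X |i\rangle\langle i|$-type insertions with the factor $\delta_{ij}$. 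This is precisely the structure of the exact independent-bath term; it is the all-orders version of $\mathbb E_\theta[G^{(R)}_{ij}]=\delta_{ij}$. Hence the order-$m$ contributions of the two models differ only through assignments with collisions. The fraction of such assignments is at most $\binom{m}{2}/R$. For the exact model I would write the full order-$m$ term as the limit $R\to\infty$ of the same label average, or equivalently note that it equals the collision-free average. The difference is then bounded by $2\binom{m}{2}/R$ times the maximal trace norm of a single label-resolved term.

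A single label-resolved term is bounded in trace norm using only that the system propagators are unitary and $\|\hat Z_a\|_\infty\le1$. This gives at most $\prod|c(t_k-t_l)|$ integrated over the ordered contour times. Summing over the $(2m-1)!!$ pairings and the contour orderings yields the usual bound $(2\int_0^t\!\int_0^{s}|c(s-u)|\,\mathrm du\,\mathrm ds)^m\,2^{O(m)}/m!$. Setting $\kappa_t=\int_0^t\!\int_0^t |c(s-u)|\,\mathrm du\,\mathrm ds$, I expect
\begin{equation}
A_t=\sum_{m\ge1} m(m-1)\,\frac{(C\kappa_t)^m}{m!}=(C\kappa_t)^2 e^{C\kappa_t}
\end{equation}
for an absolute constant $C$. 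This depends only on $t$ and $c$, not on $N_{\text{s}}$ or on $\hat H_{\text{s}}$. The observable bound then follows from $|\Tr[O X]|\le\|O\|_\infty\|X\|_{\rm tr}$.

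The main obstacle is making the collision-free identification rigorous while keeping the combinatorics $N_{\text{s}}$-free. One must check that, for distinct labels, the twirl of interleaved $\hat Z_a,\hat Z_a^\dagger$ between arbitrary system propagators reproduces exactly the exact model's insertion of $\sum_i |i\rangle\langle i|\otimes|i\rangle\langle i|$ on both contour branches. A further difficulty is that the Wick expansion of the thermal bath (with $\coth$ factors) mixes $\hat b^\dagger$ and $\hat b$ at each vertex. To handle this, I would first rewrite each vertex as $\hat Z_a\hat b^\dagger+\hat Z_a^\dagger\hat b$ and verify that thermal Wick contractions only pair $\hat b^\dagger$ with $\hat b$ of the same copy. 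Then each contracted pair carries one $\hat Z_a$ and one $\hat Z_a^\dagger$, and the phases cancel pairwise exactly when the labels are distinct.
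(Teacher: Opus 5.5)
Your proposal is correct and follows essentially the same route as the paper's proof. Both use the Dyson/Wick expansion with $\delta_{ij}\mapsto G^{(R)}_{ij}$, expand into channel assignments so that collision-free ones reproduce the exact term, and absorb the site sums into the unitaries $\hat Z_a$ (the paper's $\hat D^{(a)}_{\pm}(t)$) so that each label-resolved map is a trace-norm contraction; with a collision fraction $\le m(m-1)/(2R)$, resummation gives $A_t=(F_t^2/4)e^{F_t/2}$ with $F_t=4\kappa_t$, i.e.\ your $C=2$. One small correction: at finite temperature the two orientations of a thermal contraction produce different site maps on colliding assignments, so they cannot be recombined into $c$ before taking norms, and your $\kappa_t$ must be built from $|c_+|+|c_-|$ (the positive- and negative-frequency parts of $c$) rather than $|c|$, exactly as the paper does with $F_{t,\beta}$.
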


\begin{proof}
We first prove the zero-temperature case. At finite temperature, write $c^{\sigma\tau}=c_{+}^{\sigma\tau}+c_{-}^{\sigma\tau}$ and define $F_{t,\beta}:=\sum_{\sigma,\tau}\int_0^t\int_0^t (\lvert c_{+}^{\sigma\tau}(s,s')\rvert+\lvert c_{-}^{\sigma\tau}(s,s')\rvert)\,\mathrm{d}s'\,\mathrm{d}s$ and $A_{t,\beta}:=(F_{t,\beta}^2/4)\mathrm e^{F_{t,\beta}/2}$. The argument below then applies with $F_t$ and $A_t$ replaced by $F_{t,\beta}$ and $A_{t,\beta}$, respectively.
Let us define \(\hat P_i=|i\rangle\langle i|\), \(\hat U(t)=\mathrm e^{-\mathrm{i}\hat H_{\text{s}}t}\), and \(\hat S_i(t)=\hat U^\dagger(t)\hat P_i\hat U(t)\). Note that \(\sum_i\hat P_i=\hat I\). We use \(\mathcal S_i^{\mathrm l}(t)\hat X=\hat S_i(t)\hat X\) and \(\mathcal S_i^{\mathrm r}(t)\hat X=\hat X\hat S_i(t)\). We will use the notation \(\sigma_m\in\{\mathrm l,\mathrm r\}\) to record whether the \(m\)-th interaction acts by left or right multiplication on the reduced density operator; we refer to this as the branch label and set \(\eta_{\mathrm l}=1\), \(\eta_{\mathrm r}=-1\). Let \(\Delta_n(t)=\{0\leq t_n\leq\cdots\leq t_1\leq t\}\). For multi-indices \(\bm i\) and \(\bm\sigma\), write \(\mathcal S_{\bm i}^{\bm\sigma}(\bm t)=\mathcal S_{i_1}^{\sigma_1}(t_1)\cdots\mathcal S_{i_n}^{\sigma_n}(t_n)\), and let \(\sum_{\bm i,\bm\sigma}\) denote the sum over all site and branch labels.

On the bath side, we set up the notation for the unitary SPA realization. The coupled-Lindblad realization is addressed in Corollary~\ref{cor:lindblad}. Let \(\hat\rho_{\text{b}}\) and \(\hat\rho_{\text{b}}^{(R)}\) denote the corresponding initial Gaussian bath states, and define \(\hat B_i(t)=\sum_\alpha g_\alpha\omega_\alpha(\hat b_{i\alpha}^\dagger \mathrm e^{\mathrm{i}\omega_\alpha t}+\hat b_{i\alpha}\mathrm e^{-\mathrm{i}\omega_\alpha t})\). For \(R\) auxiliary bath channels, let \(\theta=\{\theta_i^{(a)}\}\), \(r_i^{(a)}=\mathrm e^{\mathrm{i}\theta_i^{(a)}}\), and
\(\hat B_i^{(R)}(t;\theta)=R^{-1/2}\sum_{a,\alpha}g_\alpha\omega_\alpha(r_i^{(a)}\hat b_{a\alpha}^\dagger\mathrm e^{\mathrm{i}\omega_\alpha t}+r_i^{(a)*}\hat b_{a\alpha}\mathrm e^{-\mathrm{i}\omega_\alpha t})\). In the interaction picture, the exact coupling has the form \(\sum_i\hat S_i(t)\otimes\hat B_i(t)\). Let us also define superoperators \(\mathcal B_i^{\mathrm l}(t)\hat X=\hat B_i(t)\hat X\), \(\mathcal B_i^{\mathrm r}(t)\hat X=\hat X\hat B_i(t)\), \(\mathcal B_{i,R}^{\mathrm l}(t;\theta)\hat X=\hat B_i^{(R)}(t;\theta)\hat X\), and \(\mathcal B_{i,R}^{\mathrm r}(t;\theta)\hat X=\hat X\hat B_i^{(R)}(t;\theta)\). The branch-resolved scalar bath correlations are defined by \(\Tr_{\text{b}}[\mathcal B_i^\sigma(t)\mathcal B_j^\tau(s)\hat\rho_{\text{b}}]=\delta_{ij}c^{\sigma\tau}(t,s)\), for \(\sigma,\tau\in\{\mathrm l,\mathrm r\}\). Here products of superoperators are composed from right to left. Note that \(c^{\mathrm l\mathrm l}(t,s)=c^{\mathrm r\mathrm l}(t,s)=c(t-s)\), while \(c^{\mathrm r\mathrm r}(t,s)=c^{\mathrm l\mathrm r}(t,s)=(c(t-s))^*\).

We write the ordered products as \(\mathcal B_{\bm i}^{\bm\sigma}(\bm t)=\prod_{m=1}^n\mathcal B_{i_m}^{\sigma_m}(t_m)\) and \(\mathcal B_{\bm i,R}^{\bm\sigma}(\bm t;\theta)=\prod_{m=1}^n\mathcal B_{i_m,R}^{\sigma_m}(t_m;\theta)\). Define the exact bath moments \(\Phi_{\bm i}^{\bm\sigma}(\bm t)=\Tr_{\text{b}}[\mathcal B_{\bm i}^{\bm\sigma}(\bm t)\hat\rho_{\text{b}}]\) and the phase-averaged random-bath moments \(\Phi_{\bm i,R}^{\bm\sigma}(\bm t)=\mathbb E_\theta\Tr_{\text{b}^{(R)}}[\mathcal B_{\bm i,R}^{\bm\sigma}(\bm t;\theta)\hat\rho_{\text{b}}^{(R)}]\).

We work with the interaction-picture reduced states
\(\hat\rho_{\text{s},I}(t)=\hat U^\dagger(t)\hat\rho_{\text{s}}(t)\hat U(t)\) and
\(\hat\rho_{\text{s},I}^{(R)}(t)=\hat U^\dagger(t)\hat\rho_{\text{s}}^{(R)}(t)\hat U(t)\).
The exact interaction-picture reduced state has the Dyson expansion
\begin{small}
\begin{align*}
\hat\rho_{\text{s},I}(t)
&=
\sum_{n=0}^{\infty}(-\mathrm{i})^n
\int_{\Delta_n(t)} \mathrm{d}\bm t
\sum_{\bm i,\bm\sigma}
\left(\prod_{m=1}^{n}\eta_{\sigma_m}\right)
\Phi_{\bm i}^{\bm\sigma}(\bm t)
\nonumber\\
&\quad\times
\mathcal S_{\bm i}^{\bm\sigma}(\bm t)\hat\rho_{\text{s}}(0).
\end{align*}
\end{small}
The expansion for \(\hat\rho_{\text{s},I}^{(R)}(t)\) is obtained by replacing \(\Phi_{\bm i}^{\bm\sigma}(\bm t)\) with \(\Phi_{\bm i,R}^{\bm\sigma}(\bm t)\); the system factor \(\mathcal S_{\bm i}^{\bm\sigma}(\bm t)\) and branch signs are unchanged. The \(n=0\) term is \(\hat\rho_{\text{s}}(0)\). Throughout this proof, \(\|\cdot\|_{\rm tr}\) denotes the trace norm.

Since the baths are Gaussian, odd orders vanish. At order \(2m\), let
\(\pi=\{\{p_\ell,q_\ell\}\}_{\ell=1}^m\)
range over all pairings, with \(p_\ell<q_\ell\). Set
\(c_\ell^{\bm\sigma}(\bm t)
=c^{\sigma_{p_\ell}\sigma_{q_\ell}}(t_{p_\ell},t_{q_\ell})\).
At zero temperature, orient each pair as \((u_\ell,v_\ell)\) so that
its phase factor is \(r_{i_{u_\ell}}r_{i_{v_\ell}}^*\), and define
\(G_{ij}^{(R)}=R^{-1}\sum_{a=1}^Rr_i^{(a)}r_j^{(a)*}\).
Wick's theorem gives
\begin{small}
\begin{equation}\label{eq:wick-expansion}
\begin{aligned}
\Phi_{\bm i}^{\bm\sigma}(\bm t)
&=\sum_\pi\prod_{\ell=1}^m
c_\ell^{\bm\sigma}(\bm t)\,
\delta_{i_{u_\ell}i_{v_\ell}},\\
\Phi_{\bm i,R}^{\bm\sigma}(\bm t)
&=\sum_\pi\mathbb E_\theta
\prod_{\ell=1}^m
c_\ell^{\bm\sigma}(\bm t)
G_{i_{u_\ell}i_{v_\ell}}^{(R)}.
\end{aligned}
\end{equation}
\end{small}
Thus the only replacement is
\(\delta_{ij}\mapsto G_{ij}^{(R)}\).

Let \(\mathcal A_R=\{1,\ldots,R\}^m\) denote the set of channel assignments \(\bm a=(a_1,\ldots,a_m)\). Expanding the product of Gram factors gives \(\prod_{\ell=1}^{m}G_{i_{u_\ell}i_{v_\ell}}^{(R)}=R^{-m}\sum_{\bm a\in\mathcal A_R}\prod_{\ell=1}^{m}r_{i_{u_\ell}}^{(a_\ell)}r_{i_{v_\ell}}^{(a_\ell)*}\). For the site part of a fixed pairing and branch pattern, define
\begin{small}
\begin{align*}
\mathcal T_{\pi}^{\bm\sigma}(\bm t)\hat X
&=
\sum_{\bm i}
\prod_{\ell=1}^{m}
\delta_{i_{u_\ell}i_{v_\ell}}\,
\mathcal S_{\bm i}^{\bm\sigma}(\bm t)\hat X,
\nonumber\\
\mathcal M_{\pi,\bm a,\theta}^{\bm\sigma}(\bm t)\hat X
&=
\sum_{\bm i}
\prod_{\ell=1}^{m}
r_{i_{u_\ell}}^{(a_\ell)}
r_{i_{v_\ell}}^{(a_\ell)*}\,
\mathcal S_{\bm i}^{\bm\sigma}(\bm t)\hat X.
\end{align*}
\end{small}
Define the phase-averaged random site map by \(\mathcal T_{\pi,R}^{\bm\sigma}(\bm t)=R^{-m}\sum_{\bm a\in\mathcal A_R}\mathbb E_\theta\mathcal M_{\pi,\bm a,\theta}^{\bm\sigma}(\bm t)\). Writing \(\delta\hat\rho_{\text{s},I}^{(2m)}(t)\) for the exact minus random \(2m\)-th order terms, the preceding definitions give
\begin{small}
\begin{align*}
\delta\hat\rho_{\text{s},I}^{(2m)}(t)
&=
(-\mathrm{i})^{2m}
\int_{\Delta_{2m}(t)}\mathrm{d}\bm t
\sum_{\bm\sigma}
\left(\prod_{j=1}^{2m}\eta_{\sigma_j}\right)
\nonumber\\
&\quad\times
\sum_{\pi}
\left(\prod_{\ell=1}^{m}c_\ell^{\bm\sigma}(\bm t)\right)
\left(
\mathcal T_{\pi}^{\bm\sigma}(\bm t)
-
\mathcal T_{\pi,R}^{\bm\sigma}(\bm t)
\right)\hat\rho_{\text{s}}(0),
\end{align*}
\end{small}
If \(a_1,\ldots,a_m\) are distinct, phase independence gives
\begin{equation*}
\mathbb E_\theta
\prod_{\ell=1}^m
r_{i_{u_\ell}}^{(a_\ell)}
r_{i_{v_\ell}}^{(a_\ell)*}
=
\prod_{\ell=1}^m
\delta_{i_{u_\ell}i_{v_\ell}},
\end{equation*}
and hence
\(\mathbb E_\theta
\mathcal M_{\pi,\bm a,\theta}^{\bm\sigma}
=\mathcal T_\pi^{\bm\sigma}\).
Thus only assignments containing a repeated channel contribute to
\(\mathcal T_\pi^{\bm\sigma}-\mathcal T_{\pi,R}^{\bm\sigma}\).
Their fraction satisfies
\begin{equation}\label{eq:collision-fraction}
p_{\mathrm{coll}}(m,R)
=1-\frac{(R)_m}{R^m}
\leq\frac{m(m-1)}{2R},
\end{equation}
where $(R)_m=R(R-1)\cdots(R-m+1)$, with $(R)_m=0$ for $m>R$.
It remains only to control the site sums in the maps appearing above. For each auxiliary channel \(a\) and \(\varepsilon=\pm1\), define the phase-dressed site sum \(\hat D_\varepsilon^{(a)}(t)=\sum_i(r_i^{(a)})^\varepsilon\hat S_i(t)\). This operator is unitary because it equals \(\hat U^\dagger(t)[\sum_i(r_i^{(a)})^\varepsilon\hat P_i]\hat U(t)\). Thus, for every \(\bm a\) and \(\theta\), the fully summed map \(\mathcal M_{\pi,\bm a,\theta}^{\bm\sigma}(\bm t)\) is a product of left and right multiplications by unitaries, and so is a trace-norm contraction. Phase averages preserve this contraction bound, and \(\mathcal T_{\pi,R}^{\bm\sigma}(\bm t)\) is an average of such phase-averaged maps over \(\bm a\). To bound the exact map, introduce $m$ independent auxiliary phase vectors $\bm\vartheta=(\vartheta^{(1)},\ldots,\vartheta^{(m)})$ and assign a distinct vector to each Wick block. With $\bm a_\star=(1,\ldots,m)$, phase averaging gives
\begin{equation*}
\mathcal T_{\pi}^{\bm\sigma}(\bm t)\hat X
=
\mathbb E_{\bm\vartheta}
\mathcal M_{\pi,\bm a_\star,\bm\vartheta}^{\bm\sigma}(\bm t)\hat X.
\end{equation*}
Because every map inside this average is a trace-norm contraction, convexity gives $\|\mathcal T_{\pi}^{\bm\sigma}(\bm t)\hat X\|_{\rm tr}\leq\|\hat X\|_{\rm tr}$. Averaging the difference of these contractions over the repeated-channel assignments gives
\begin{small}
\begin{align}
\left\|
\left(
\mathcal T_{\pi}^{\bm\sigma}(\bm t)
-\mathcal T_{\pi,R}^{\bm\sigma}(\bm t)
\right)\hat X
\right\|_{\rm tr}
&\leq
2p_{\rm coll}(m,R)\|\hat X\|_{\rm tr}
\nonumber\\
&\leq
\frac{m(m-1)}{R}\|\hat X\|_{\rm tr} .
\label{eq:site-map-bound}
\end{align}
\end{small}
Let us define
\begin{equation}\label{eq:Ft}
\begin{aligned}
F_t
&:=\sum_{\sigma,\tau}\int_0^t\int_0^t
|c^{\sigma\tau}(s,s')|\,\mathrm{d}s'\,\mathrm{d}s \\
&=4\int_0^t\int_0^t|c(s-s')|\,\mathrm{d}s'\,\mathrm{d}s.
\end{aligned}
\end{equation}
Since
\(\sum_{\sigma,\tau}|c^{\sigma\tau}(s,s')|
=4|c(s-s')|\), permutation symmetry of the pairing sum converts
the integral over \(\Delta_{2m}(t)\) into \(1/(2m)!\) times the
full-cube integral, which factorizes. Therefore,
\begin{small}
\begin{align*}
&\sum_{\pi}\int_{\Delta_{2m}(t)}\mathrm d\bm t
\sum_{\bm\sigma}
\prod_{\ell=1}^{m}
\left|c_\ell^{\bm\sigma}(\bm t)\right|
\nonumber =
\frac{(2m-1)!!}{(2m)!}F_t^m.
\end{align*}
\end{small}
\begin{small}
\begin{align}
\left\|\delta\hat\rho_{\text{s},I}^{(2m)}(t)\right\|_{\rm tr}
&\leq
\frac{m(m-1)}{R}
\frac{(2m-1)!!}{(2m)!}F_t^m
\nonumber\\
&=
\frac{m(m-1)}{R}\frac{F_t^m}{2^m m!}.
\label{eq:order-error-bound}
\end{align}
\end{small}
Unitary invariance and summation over all orders give
\nopagebreak[4]
\begin{small}
\begin{align*}
\left\|\hat\rho_{\text{s}}(t)-\hat\rho_{\text{s}}^{(R)}(t)\right\|_{\rm tr}
&\leq \sum_{m=2}^{\infty} \|\delta\hat\rho_{\text{s},I}^{(2m)}(t)\|_{\rm tr}
\nonumber\\
&\leq\frac{1}{R}
\sum_{m=2}^{\infty}m(m-1)\frac{(F_t/2)^m}{m!}
=\frac{A_t}{R},
\end{align*}
\end{small}
where \(A_t=(F_t^2/4)\mathrm e^{F_t/2}\). This constant depends only on the scalar bath correlation and the final time. No factor depending on \(N_{\text{s}}\) appears because the site sums have already been absorbed into the unitary phase-dressed operators \(\hat D_\varepsilon^{(a)}(t)\). The same argument applies to bounded electronic observables by duality.
\end{proof}

Next, we extend this result to the coupled-Lindblad realization used in our simulations.
For the coupled-Lindblad SPA model, define
\begin{small}
\begin{align*}
\hat B_{i,R}(\theta)
&=\frac{1}{\sqrt R}\sum_{a=1}^R\sum_{k=1}^{n_{\mathrm b}}
\left(r_i^{(a)}\epsilon_k\hat b_{a,k}^\dagger
+r_i^{(a)*}\epsilon_k^*\hat b_{a,k}\right),\\
\mathcal L_{\mathrm b}^{(R)}(\hat X)
&=-\mathrm i[\hat H_{\mathrm b}^{(R)},\hat X]+\mathcal D(\hat X),
\end{align*}\end{small}using the quantities defined in the main text, and let
\(\hat\rho_{\mathrm b}^{(R)}\) be the stationary Gaussian state of
\(\mathcal L_{\mathrm b}^{(R)}\). Let
\(\mathcal B_{i,R}^{\mathrm l}(\theta)\) and
\(\mathcal B_{i,R}^{\mathrm r}(\theta)\) denote left and right multiplication by
\(\hat B_{i,R}(\theta)\), respectively. For \(t_1\geq\cdots\geq t_n\), replace the unitary random-bath moment by
\begin{small}
\begin{align*}
\Phi_{\bm i,R}^{\bm\sigma}(\bm t)
&=\mathbb E_\theta\Tr_{\mathrm b^{(R)}}\!\Bigl[
\mathcal B_{i_1,R}^{\sigma_1}(\theta)
e^{(t_1-t_2)\mathcal L_{\mathrm b}^{(R)}}
\nonumber\\[-0.5ex]
&\qquad\cdots
e^{(t_{n-1}-t_n)\mathcal L_{\mathrm b}^{(R)}}
\mathcal B_{i_n,R}^{\sigma_n}(\theta)
\hat\rho_{\mathrm b}^{(R)}\Bigr].
\end{align*}\end{small}

\begin{corollary}[Coupled-Lindblad realization]\label{cor:lindblad}
\normalfont
Given a coupled-Lindblad $R$-bath SPA realization that reproduces the exact BCF $c(t)$, its phase-averaged reduced system state satisfies the conclusion of Theorem~\ref{thm:main} with the same system-size-independent constant $A_t$.
\end{corollary}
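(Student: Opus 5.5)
The plan is to reduce the corollary to the proof of Theorem~\ref{thm:main}: the only place that proof uses the bath is the Wick factorization \eqref{eq:wick-expansion}, with scalar pair functions $c^{\sigma\tau}$ and the replacement $\delta_{ij}\mapsto G^{(R)}_{ij}$. Everything downstream is purely a statement about the system maps and the scalar correlations. These downstream ingredients are the collision estimate \eqref{eq:collision-fraction}, the contraction bound \eqref{eq:site-map-bound}, the definition \eqref{eq:Ft} of $F_t$, and the order-by-order bound \eqref{eq:order-error-bound}. So it suffices to show two things. First, the Lindblad random-bath moments $\Phi^{\bm\sigma}_{\bm i,R}$ defined above admit the same Wick expansion, with the same branch-resolved pair functions $c^{\sigma\tau}(t,s)$ as the exact bath. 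Second, the Dyson series for the enlarged Lindblad dynamics, traced over the auxiliary modes, still has the form used in the theorem.

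For the Dyson series, I would move to the interaction picture with respect to $\hat H_{\mathrm s}$ only on the system side. I would keep the bath generator $\mathcal L_{\mathrm b}^{(R)}$ as the free bath propagation and expand in the coupling superoperator $-\mathrm i[\sum_i \hat P_i\otimes\hat B_{i,R}(\theta),\cdot\,]$. Because $\mathcal L_{\mathrm b}^{(R)}$ acts only on the bath, and the system-bath coupling is a sum of product superoperators $\mathcal S_i^\sigma\otimes\mathcal B^\sigma_{i,R}$, each term factorizes into the same system map $\mathcal S^{\bm\sigma}_{\bm i}(\bm t)$ times the bath quantity $\Phi^{\bm\sigma}_{\bm i,R}(\bm t)$ written above. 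Stationarity of $\hat\rho^{(R)}_{\mathrm b}$ absorbs the leading and trailing propagators, and trace preservation absorbs the final propagator $e^{(t-t_1)\mathcal L_{\mathrm b}^{(R)}}$.

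For the Wick structure, the Lindblad generator is quadratic, so it maps Gaussian states to Gaussian states, and the stationary state here is the vacuum. It follows that the multi-time ordered moments obtained by interleaving linear operators with $e^{s\mathcal L^{(R)}_{\mathrm b}}$ obey a quantum-regression Wick theorem: they equal the sum over pairings of products of two-time moments. I would justify this via the generating functional or a direct induction that uses linearity of $e^{s\mathcal L^{\dagger}}$ on $\hat b,\hat b^\dagger$. The two-time moments for a fixed $\theta$ are $R^{-1}\sum_a r^{(a)}_i r^{(a)*}_j\,\tilde c^{\sigma\tau}(t,s)$, with $\tilde c^{\mathrm{ll}}=\tilde c^{\mathrm{rl}}=\epsilon^\dagger e^{(-\mathrm iK-\Gamma)(t-s)}\epsilon$ and the complex conjugate for the other branches. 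This is because vacuum annihilation kills the $\hat b$ terms and the $R$ channels are independent. By hypothesis $\tilde c=c$, so the pair functions coincide with the exact $c^{\sigma\tau}$ and the phase structure is again $G^{(R)}_{i_u i_v}$. The same orientation of pairs applies, so each pair carries $r_{i_u}r^*_{i_v}$ as in the zero-temperature unitary case.

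Once this is in place, the rest of the theorem's proof transfers verbatim. The maps $\mathcal M$ are again products of phase-dressed unitaries, the collision argument gives the factor $m(m-1)/R$, and the combinatorics produce the same $F_t$ and $A_t=(F_t^2/4)e^{F_t/2}$. Absolute convergence follows because the bound sums. The main obstacle is the regression/Wick step: the Lindblad evolution is not unitary, so there are no Heisenberg operators to pair. The argument must show, on each branch, that $e^{s\mathcal L^{(R)}_{\mathrm b}}$ acting after a left or right multiplication by $\hat b$ or $\hat b^\dagger$ produces a contraction determined solely by $e^{(-\mathrm iK-\Gamma)s}$. I would verify this first, since it is what makes the branch functions match $c^{\sigma\tau}$ exactly rather than only approximately.
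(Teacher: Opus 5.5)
Your proposal is correct and follows the paper's route. Both arguments reduce the corollary to Theorem~\ref{thm:main} by showing that the vacuum-initialized coupled-Lindblad Duhamel moments obey Wick's theorem, because the generator is quadratic, with pair functions equal to $c^{\sigma\tau}$ by the exact-BCF hypothesis and phase factors $G^{(R)}_{i_u i_v}$, so that the collision estimate, contraction estimate and $F_t$ summation carry over verbatim. You also supply two steps the paper leaves implicit: stationarity absorbing the leading propagator, and trace preservation absorbing the trailing one. For the Wick induction itself, note that the adjoint semigroup $e^{s\mathcal L_{\mathrm b}^{(R)\dagger}}$ is not multiplicative, so linearity on single $\hat b,\hat b^\dagger$ is not enough. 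Use instead the linear commutation relations of $\mathcal L_{\mathrm b}^{(R)}$ with the left/right multiplication superoperators, or the generating-functional argument you mention.
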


\begin{proof}
The only new point relative to Theorem~\ref{thm:main} is the bath propagation between interaction events. Exact reproduction of $c(t)$ gives, for $t\geq s$ and $\sigma,\tau\in\{\mathrm l,\mathrm r\}$,
\begin{equation*}
\mathbb E_\theta\Tr_{\mathrm b^{(R)}}\!\left[
\mathcal B_{i,R}^{\sigma}(\theta)e^{(t-s)\mathcal L_{\mathrm b}^{(R)}}
\mathcal B_{j,R}^{\tau}(\theta)\hat\rho_{\mathrm b}^{(R)}
\right]
=\delta_{ij}c^{\sigma\tau}(t,s).
\end{equation*}
The quadratic Hamiltonian and linear Lindblad operators preserve Gaussianity, so the Duhamel moments defined above obey Wick's theorem. The explicit phase dressing in $\hat B_{i,R}(\theta)$ then gives exactly the Wick expansion~\eqref{eq:wick-expansion}. From this point onward, the unitary proof applies verbatim: the repeated-channel estimate~\eqref{eq:collision-fraction} and site-map contraction~\eqref{eq:site-map-bound} yield the order-by-order bound~\eqref{eq:order-error-bound}, whose sum is controlled by $F_t$ in Eq.~\eqref{eq:Ft}.
\end{proof}
\end{document}

% --- supplement: SM.tex ---

\preprint{APS/123-QED}
\newcommand{\pto}{PbTi$\text{O}_3$ }
\newcommand{\bto}{BaTi$\text{O}_3$ }
\newcommand{\red}[1]{\textcolor{red}{#1}}

\title{Supplemental Material for ``Scalable simulation of non-Markovian quantum transport by stochastic-phase bath reduction''}

\author{Zhen Huang}
\email{hertz@math.berkeley.edu}
\affiliation{Department of Mathematics, University of California, Berkeley, CA 94720, USA}
\affiliation{Center for Computational Quantum Physics, Flatiron Institute, New York, NY 10010, USA}
\affiliation{Center for Computational Mathematics, Flatiron Institute, New York, NY 10010, USA}
\author{Lin Lin}
\email{lin@caltech.edu}
\affiliation{Department of Mathematics, University of California, Berkeley, CA 94720, USA}
\affiliation{Applied Mathematics and Computational Research Division, Lawrence Berkeley National Laboratory, Berkeley, CA 94720, USA}
\affiliation{Department of Computing and Mathematical Sciences, California Institute of Technology, Pasadena, CA 91125, USA}
\author{Pinchen Xie}
\email{pinchenxie@lbl.gov}
\affiliation{Applied Mathematics and Computational Research Division, Lawrence Berkeley National Laboratory, Berkeley, CA 94720, USA}

\date{\today}

\maketitle

\section{Implementation of state evolution propagators}
Within the SPA framework, once the stochastic phase factors are sampled, the quantum dynamics can be propagated by any standard Hamiltonian-simulation technique, including full state-vector propagation, Krylov propagation~\cite{Saad1992}, and tensor-network time evolution~\cite{Vidal2004TEBD,WhiteFeiguin2004,Daley2004,Haegeman2011TDVP}. %The same Hamiltonian formulation is also compatible with digital quantum-simulation algorithms, including qDRIFT \cite{Campbell2019qDRIFT}  and qubitization approaches or quantum singular-value transformation~\cite{LowChuang2019Qubitization}.
For SPA simulations reported in the main text, we use full state-vector propagation with a second-order Suzuki--Trotter decomposition~\cite{Trotter1959,Suzuki1991}.

For a SPA simulation with $R$ global baths, each containing $n$ auxiliary bosonic modes, the total number of auxiliary modes is $Z=nR$. 
A full system-bath state vector is represented as a tensor $\Psi_{i,l_1,\ldots,l_{Z}}(t)$, where $i$ labels the tight-binding site and $l_z$ denotes the occupation of the auxiliary bosonic mode $z\in\{1,\ldots,Z\}$. We use $\theta$ to denote the sampled stochastic phase factors.

\subsection{Unitary propagator for uncompressed bath}
For uncompressed baths, the quantum dynamics are driven by standard unitary evolution. The Hamiltonian is $\hat H_\theta^{(R)}=\hat H_{\text{s}}+\hat H_{\text{b}}^{(R)}+\hat H_{\text{sb},\theta}^{(R)}$, where
\begin{equation}    
\hat H_{\text{b}}^{(R)}=\sum_{a=1}^{R}\sum_{\alpha=1}^{n}
\omega_\alpha \hat b_{a\alpha}^\dagger \hat b_{a\alpha},
\end{equation}
and
\begin{equation}\label{eq:ham-eb-eff}
\hat H_{\text{sb},\theta}^{(R)}
=
\sum_{i=1}^{N_{\text{s}}} |i\rangle\langle i|
\otimes
\sum_{a=1}^{R}\sum_{\alpha=1}^{n}
\frac{g_\alpha\omega_\alpha}{\sqrt R}
\left(
r_i^{(a)}\hat b_{a\alpha}^\dagger
+
r_i^{(a)*}\hat b_{a\alpha}
\right).
\end{equation}

For simplicity, we let $\hat{h}_{\text{b},z} \equiv \omega_\alpha \hat b_{a\alpha}^\dagger \hat b_{a\alpha}$ with $z=(a-1)n+\alpha$. For the same reason, we let $\hat{h}_{\text{sb},z} \equiv \sum_{i=1}^{N_{\text{s}}} |i\rangle\langle i|
\otimes
\frac{g_\alpha\omega_\alpha}{\sqrt R}
\left(
r_i^{(a)}\hat b_{a\alpha}^\dagger
+
r_i^{(a)*}\hat b_{a\alpha}
\right)$ with $z=(a-1)n+\alpha$. 

Because $\hat h_{\text{b},z}$ acts only on the auxiliary mode $z$, and $\hat h_{\text{sb},z}$ acts only on the system and the auxiliary mode $z$, we evolve the system-bath state vector using a second-order Trotter splitting scheme, given as 
\begin{equation}\label{sm:split}
\Psi(t+\delta t)
=
\hat U_{\text{s}}^{1/2}\left(\prod_{z=1}^Z \hat U_{\text{b},z}^{1/2}\right)
\left(\prod_{z=1}^Z \hat U_{\text{sb},z}\right)
\left(\prod_{z=1}^Z \hat U_{\text{b},z}^{1/2}\right)\hat U_{\text{s}}^{1/2}
\Psi(t).
\end{equation}
Here, 
$\hat U_{\text{s}}^{1/2}=e^{-\mathrm{i}\hat H_{\text{s}}\delta t/2}$, 
$\hat U_{\text{b},z}^{1/2}=e^{-\mathrm{i}\hat h_{\text{b},z}\delta t/2}$,
and $\hat U_{\text{sb},z}=e^{-\mathrm{i}\hat h_{\text{sb},z}\delta t}$. 

The operators $\hat U_{\text{s}}^{1/2}$, $\hat U_{\text{b},z}^{1/2}$, and $\hat U_{\text{sb},z}$ are calculated and stored before the simulation. Thus, the full many-body Hamiltonian is never formed.

\subsection{Lindblad propagator for compressed bath}

We next consider the case in which a local vibronic environment is compressed into $n$ auxiliary modes. With $R$ global baths, we still denote the total number of auxiliary modes by $Z=nR$. 

The quantum dynamics are driven by the Lindblad equation
\begin{equation}\label{sm:lindblad}
\frac{\mathrm{d}}{\mathrm{d}t}\hat\rho_{\theta}^{(R)}(t)=-\mathrm{i}[\hat H_{\text{s}}+\hat H_{\text{b}}^{(R)}+\hat H_{\text{sb},\theta}^{(R)},\hat\rho_{\theta}^{(R)}(t)]+\sum_{a=1}^{R}\sum_{k,l=1}^{n}
\Gamma_{kl}
\left(
2\hat b_{ak}\hat\rho_{\theta}^{(R)}(t)\hat b_{al}^\dagger
-
\{\hat b_{al}^\dagger\hat b_{ak},\hat\rho_{\theta}^{(R)}(t)\}
\right).
\end{equation}
By a slight abuse of notation, we let
\begin{equation}
\hat H_{\text{b}}^{(R)}=\sum_{a=1}^{R}\sum_{k,l=1}^{n}
K_{kl}\hat b_{ak}^\dagger\hat b_{al},
\end{equation}
and 
\begin{equation}
\hat H_{\text{sb},\theta}^{(R)}
=
\sum_{i=1}^{N_{\text{s}}} |i\rangle\langle i|
\otimes
\sum_{a=1}^{R}\sum_{k=1}^{n}
\frac{1}{\sqrt R}
\left(
r_i^{(a)}\epsilon_k\hat b_{ak}^\dagger
+
r_i^{(a)*}\epsilon_k^*\hat b_{ak}
\right).
\end{equation}
Here, the bath Hamiltonian matrix $K$, the damping matrix $\Gamma$, and the coupling vector $\epsilon$ are determined through the bath compression procedure.
During bath compression, either $K$ or $\Gamma$ can always be diagonalized, although they cannot generally be diagonalized simultaneously. In practice, we choose a bosonic basis in which $\Gamma=\mathrm{diag}\{\gamma_1,\ldots,\gamma_{n}\}$ is diagonal; we adopt this convention throughout this section. Equation~\ref{sm:lindblad} then becomes
 \begin{equation}    
\frac{\mathrm{d}}{\mathrm{d}t}\hat\rho_{\theta}^{(R)}(t)
=-\mathrm{i}[\hat H_{\text{s}}+\hat H_{\text{b}}^{(R)}+\hat H_{\text{sb},\theta}^{(R)},\hat\rho_{\theta}^{(R)}(t)]
+\sum_{a=1}^{R}\sum_{k=1}^{n}
\gamma_{k}
\left(
2\hat b_{ak}\hat\rho_{\theta}^{(R)}(t)\hat b_{ak}^\dagger
-
\{\hat b_{ak}^\dagger\hat b_{ak},\hat\rho_{\theta}^{(R)}(t)\}
\right).
\end{equation}

Instead of directly evolving the system-bath density operator $\hat\rho_{\theta}^{(R)}$, we adopt the Monte Carlo wavefunction propagator~\cite{PlenioKnight1998} that unravels the Lindblad equation into stochastic trajectories of the pure system-bath state vector $\Psi$ subject to random quantum jumps. This propagator is efficient for SPA because SPA also requires ensemble averaging over stochastic realizations. Each trajectory independently samples both the phase realization $\theta$ and a quantum-jump history, so a single joint ensemble average accounts for both sources of stochasticity.

Within the Monte Carlo wavefunction framework, the dense bath Hamiltonian $\hat H_{\text{b}}^{(R)}$ and jump operators \(\hat L_z\) define the effective non-Hermitian Hamiltonian
\begin{equation*}
    \hat H_{\text{eff}}
    =
    \hat H_{\text{s}}
    +
    \hat H_{\text{b}}^{(R)}
    +
    \hat H_{\text{sb},\theta}^{(R)}
    -
    \frac{\mathrm{i}}{2}\sum_{z=1}^Z \hat L_z^\dagger \hat L_z.
\end{equation*}
Here, $\hat L_z$ is given by $\hat L_z = \sqrt{2\gamma_k}\hat b_{ak}$, where $z=(a-1)n+k$.
For each time step, the pure state $\Psi$ is first propagated by $\hat H_{\text{eff}}$ through a Trotter splitting scheme over $\delta t$. Because of the non-Hermitian term in $\hat H_{\text{eff}}$, this propagation is not norm-conserving.
The loss of norm in $\Psi$ gives the jump probability; if a quantum jump occurs, the channel is sampled with weight \(\|\hat L_z|\Psi\rangle\|^2\), the corresponding \(\hat L_z\) is applied, and then the state is normalized. The details of such a scheme can be found in Ref.~\cite{PlenioKnight1998}. 

A Python implementation of these propagators can be found in the source code of \texttt{scalabath}~\cite{scalabath}.

\newpage

\section{Typical organic materials in the intermediate-coupling regime}

To illustrate how frequently the intermediate-coupling regime occurs in
organic quantum transport, Table~\ref{tab:intermediate-coupling-materials}
collects representative parameters for molecular semiconductors and
photosynthetic pigment--protein complexes.  We define $|V|$ as the largest
reported magnitude of the electronic or excitonic coupling among the relevant
intermolecular pathways and $\lambda$ as the corresponding one-site local
reorganization energy.  When a reference reports the total two-molecule
self-exchange reorganization energy, we divide that value by two to obtain the
local relaxation energy entering a Holstein model.  Reported polaron binding
energies and site reorganization energies obtained from spectral densities are
already local and are therefore used directly.  Values originally given in
$\mathrm{cm}^{-1}$ are converted to meV.

\begin{table}[h]
    \centering
    \begin{ruledtabular}
    \begin{tabular}{lccccc}
        Material or complex & Carrier & $|V|$ (meV) & $\lambda$ (meV) & 
        $\lambda/|V|$ & Reference \\
        Rubrene (single crystal) & hole & 83.0 & 80.0 & 0.96 & \cite{Coropceanu2007ChemRev} \\
        Pentacene & hole & 85.0 & 49.0 & 0.58 & \cite{Coropceanu2007ChemRev} \\
        Anthracene & hole & 44.0 & 69.0 & 1.57 & \cite{Coropceanu2007ChemRev} \\
        Naphthalene & hole & 35.0 & 93.0 & 2.66 & \cite{Coropceanu2007ChemRev} \\
        Tetracene & hole & 70.0 & 57.0 & 0.81 & \cite{Coropceanu2007ChemRev} \\
        DNTT & hole & 91.0 & 65.0 & 0.71 & \cite{Yamamoto2010DNTT} \\
        C$_7$-BTBT-C$_7$ & hole & 51.1 & 122.0 & 2.39 & \cite{Pandey2023BTBT} \\
        C$_8$-BTBT-C$_8$ & hole & 45.2 & 121.8 & 2.69 & \cite{Pandey2023BTBT} \\
        C$_{12}$-BTBT-C$_{12}$ & hole & 65.1 & 121.6 & 1.87 & \cite{Pandey2023BTBT} \\
        Monoclinic [60]PCBM & electron & 49.21 & 69.5 & 1.41 & \cite{Gajdos2013PCBM} \\
        FMO complex & exciton & 10.9 & 4.3 & 0.39 & \cite{Ishizaki2009PNAS} \\
        LH2 B850 ring (\textit{R. molischianum}) & exciton & 45 & 27 & 0.60 & \cite{kundu2020real} \\
        LHCII & exciton & 16.0 & 17.5 & 1.09 & \cite{Zhu2025LHCII} \\
        PC645 & exciton & 39.6 & 112.7 & 2.85 & \cite{Blau2018PC645} \\
        CP47 & exciton & 12.3 & 4.8 & 0.39 & \cite{Raszewski2008PSII,Kreisbeck2016PSII} \\
    \end{tabular}
    \end{ruledtabular}
    \caption{Representative electronic couplings and local reorganization
    energies for organic semiconductors and photosynthetic complexes.  The
    values are literature-derived estimates selected to compare the dominant
    coupling scale with the local vibronic relaxation scale; they are not
    unique material constants.}
    \label{tab:intermediate-coupling-materials}
\end{table}

Across these examples, $\lambda/|V|$ ranges from approximately $0.4$ to $2.8$.
Thus, the coherent coupling and vibronic relaxation energies generally differ
by no more than a factor of a few, placing these materials in or near the crossover
where neither weak- nor strong-coupling limits are uniformly reliable.  This
classification is an order-of-magnitude guide rather than a sharp boundary:
temperature, bandwidth, and dimensionality, static disorder, the distribution
of vibrational frequencies, and nonlocal electron--phonon coupling can all
affect the operative transport regime.  The tabulated parameters also depend
on crystal structure, transport direction, and electronic-structure or
spectroscopic parametrization.  For example, the rubrene simulations in the
main text use the model-specific value $\lambda=73~\mathrm{meV}$ rather than
the representative $80~\mathrm{meV}$ literature value listed here.

\newpage

\section{Details of simulations of the square lattice}

\subsection{$3\times 3$ lattice}

The calculation in Fig.~2(a,b) of the main text uses a single-particle
Holstein model on an open $3\times 3$ square lattice with unit lattice spacing.
All site energies are set to zero, and the hopping is isotropic,
$V_x=V_y=V=100~\mathrm{cm}^{-1}$.  Each site is coupled to one harmonic mode
of frequency $\omega_1=1000~\mathrm{cm}^{-1}$.
The dimensionless coupling $g_1$ is selected through
$\lambda=g_1^2\omega_1$ for $\lambda/V=0.5,1,2,$ and $4$.  The initial electronic
state is localized at the central site $(x,y)=(1,1)$ in zero-based lattice
coordinates, and the vibrational modes are initially thermal at $T=300~\mathrm{K}$.

Each bosonic mode is truncated to keep only the $d=6$ lowest oscillator levels, which yields nearly converged results for the simulations considered in the main text.  
For time evolution, the initial bath state is a pure product number state over all bosonic modes, sampled with independent local
probabilities $p_l=\frac{e^{-\beta\omega_1 l}}{\sum_{m=0}^{d-1}e^{-\beta\omega_1 m}}$. 
Here, $l=0,\ldots,d-1$, and the resulting pure system-bath state vector is propagated unitarily.  
The total state-vector dimension is $9\times d^9=90\,699\,264$.  

For each value of $\lambda/V$, the ``exact'' results in Fig.~2 of the main text are obtained by averaging over 96 independent trajectories whose initial thermal bath states are sampled from the same distribution.
Here, ``exact'' means direct unitary propagation within the specified oscillator
cutoff and time discretization. 

The corresponding SPA simulations replace the nine local bosonic modes by
$R$ shared auxiliary modes and use the phase-dressed coupling defined in the main
text.  Results are shown for $R=1$ and $R=3$.  For every value of $\lambda/V$ and every $R$, the SPA results are obtained by averaging over 200 independent trajectories with initial thermal bath state sampled from the same distribution. 
The state-vector dimension per trajectory is only $9d=54$ for $R=1$ and $9d^3=1944$ for $R=3$.

Both the direct and SPA state vectors are propagated with the second-order Trotter splitting scheme, using $\delta t=0.1~\mathrm{fs}$.  Populations are saved every $1~\mathrm{fs}$ up to $500~\mathrm{fs}$, and the production datasets use single-precision complex arithmetic.  After tracing out the bath, we calculate
\begin{equation*}
    P_{\mathrm{center}}(t)=P_{(1,1)}(t),
    \qquad
    \mathrm{MSD}(t)=\sum_i P_i(t)
    \left[(x_i-\bar x)^2+(y_i-\bar y)^2\right],
\end{equation*}
where $\bar x=\sum_i x_iP_i(t)$ and $\bar y=\sum_i y_iP_i(t)$.  The MSD is
therefore reported in units of the squared lattice spacing.

For comparison, we also propagate the reduced electronic density matrix with
the second-order time-convolutionless equation
(TCL2)~\cite{Fetherolf2017TCL2}.  The bath correlation used by TCL2 is matched
to the same truncated oscillator as the state-vector calculations.
The TCL2 equation is integrated deterministically with a fourth-order
Runge-Kutta method on the same $0.1~\mathrm{fs}$ grid. The plotted TCL2
populations are neither clipped nor renormalized. Consequently, the strong
negative eigenvalues and populations that develop at $\lambda/V=4$ remain
visible and diagnose the loss of positivity of the perturbative dynamics.

\subsection{$100\times 100$ lattice}

The calculation in Fig.~2(c,d) of the main text uses an open
$100\times 100$ lattice with the same $V=100~\mathrm{cm}^{-1}$,
$\omega=1000~\mathrm{cm}^{-1}$, and $T=300~\mathrm{K}$.  Independent static
site energies are drawn for each trajectory from
\begin{equation*}
    U_{(i,j)}\sim\operatorname{Uniform}[-W,W],
    \qquad W=200~\mathrm{cm}^{-1}.
\end{equation*}

The particle is initially localized at site $(49,49)$ in zero-based lattice
coordinates.  Production SPA simulations were performed for
$\lambda/V=8$ using $R=1$ and a $d=6$ cutoff for the auxiliary modes.
Each dataset contains 100 trajectories, with an independently
sampled disorder realization, set of stochastic phase factors, and thermal bath state for each trajectory.
Double-precision complex arithmetic is used here as required to describe long-time diffusive dynamics on the large lattice.  
The dynamics are propagated from $0$ to $10~\mathrm{ps}$ with $\delta t=0.1~\mathrm{fs}$, and
the populations are saved every $10~\mathrm{fs}$.  The main-text snapshots show the ensemble-averaged populations at $1$ and
$10~\mathrm{ps}$ on a logarithmic color scale.

As a qualitative strong-coupling comparison, we also propagate a
Lang--Firsov renormalized tight-binding (RTB) Hamiltonian for each of the same
100 disorder realizations. The RTB Hamiltonian is a plain tight-binding model with a renormalized nearest-neighbor hopping
$V'=V\exp(-g_1^2\coth(\beta\omega_1/2))$. There is no thermal bath in this model. 
Its dynamics therefore consists of strictly unitary tight-binding evolution.

At $\lambda/V=8$, $V'=44.34~\mathrm{cm}^{-1}$.  This RTB model is not a systematically converged
reference. It is included only as a qualitative strong-coupling baseline.  
The comparison in Fig.~2(d) shows that SPA and RTB dynamics
both predict strongly suppressed, disorder-frustrated spreading on the
$10~\mathrm{ps}$ time scale.

\section{Details of simulations of BChl aggregates}
\subsection{Model Parameters}

The BChl chain calculations follow the 19-site benchmark model of Ref.~\cite{kundu2020real}.  In the homogeneous case the common site energy is removed as a global energy shift and the nearest-neighbor hopping is $363~\mathrm{cm}^{-1}$.  In the inhomogeneous (staggered) case the one-based site energies alternate as
\begin{equation*}
    U_i=
    \begin{cases}
        12653.66~\mathrm{cm}^{-1}, & i\ \mathrm{odd},\\
        12457.66~\mathrm{cm}^{-1}, & i\ \mathrm{even},
    \end{cases}
\end{equation*}
and the common offset $12457.66~\mathrm{cm}^{-1}$ is subtracted in the simulation.  The nearest-neighbor hoppings alternate as $363$ and $320~\mathrm{cm}^{-1}$, beginning with the bond between sites 1 and 2.  In both cases the initial excitation is localized on site 10.  For the inhomogeneous case, site 10 is the lower-energy site.

Each BChl site is coupled to 50 intramolecular modes.  We denote the Huang--Rhys factor by $S_m=g_m^2$, so that the local reorganization energy is
\begin{equation*}
    \lambda=\sum_{m=1}^{50}S_m\omega_m
    =217.66~\mathrm{cm}^{-1}.
\end{equation*}
The mode frequencies and Huang--Rhys factors are listed in Table~\ref{tab:bchl-bath}.

\begin{table}[h]
    \centering
    \scriptsize
    \begin{ruledtabular}
    \begin{tabular}{ccc ccc ccc ccc ccc}
        $m$ & $\omega_m$ & $S_m$ & $m$ & $\omega_m$ & $S_m$ & $m$ & $\omega_m$ & $S_m$ & $m$ & $\omega_m$ & $S_m$ & $m$ & $\omega_m$ & $S_m$ \\
        1 & 84 & 0.0151 & 11 & 407 & 0.0052 & 21 & 710 & 0.0018 & 31 & 1001 & 0.0040 & 41 & 1211 & 0.0025 \\
        2 & 167 & 0.0081 & 12 & 423 & 0.0029 & 22 & 727 & 0.0266 & 32 & 1019 & 0.0097 & 42 & 1229 & 0.0021 \\
        3 & 183 & 0.0072 & 13 & 442 & 0.0023 & 23 & 776 & 0.0095 & 33 & 1066 & 0.0025 & 43 & 1252 & 0.0025 \\
        4 & 191 & 0.0196 & 14 & 473 & 0.0017 & 24 & 803 & 0.0042 & 34 & 1089 & 0.0021 & 44 & 1289 & 0.0087 \\
        5 & 214 & 0.0046 & 15 & 506 & 0.0016 & 25 & 845 & 0.0025 & 35 & 1105 & 0.0021 & 45 & 1378 & 0.0086 \\
        6 & 239 & 0.0078 & 16 & 565 & 0.0081 & 26 & 858 & 0.0025 & 36 & 1117 & 0.0103 & 46 & 1466 & 0.0021 \\
        7 & 256 & 0.0055 & 17 & 587 & 0.0039 & 27 & 890 & 0.0284 & 37 & 1137 & 0.0042 & 47 & 1519 & 0.0017 \\
        8 & 345 & 0.0161 & 18 & 623 & 0.0058 & 28 & 915 & 0.0048 & 38 & 1158 & 0.0103 & 48 & 1539 & 0.0023 \\
        9 & 368 & 0.0060 & 19 & 684 & 0.0023 & 29 & 967 & 0.0027 & 39 & 1180 & 0.0025 & 49 & 1648 & 0.0025 \\
        10 & 388 & 0.0041 & 20 & 696 & 0.0025 & 30 & 980 & 0.0031 & 40 & 1190 & 0.0031 & 50 & 1680 & 0.0027 \\
    \end{tabular}
    \end{ruledtabular}
    \caption{BChl bath frequencies $\omega_m$ in $\mathrm{cm}^{-1}$ and Huang--Rhys factors $S_m$.}
    \label{tab:bchl-bath}
\end{table}

\subsection{BCF fitting results}

At $T=300~\mathrm{K}$ the original 50-mode thermal BCF is
\begin{equation*}
    c(t)=\sum_{m=1}^{50}S_m\omega_m^2
    \left[
    \coth\left(\frac{\beta\omega_m}{2}\right)\cos(\omega_m t)
    -\mathrm{i}\sin(\omega_m t)
    \right].
\end{equation*}
This BCF is compressed into a six-mode coupled-Lindblad representation,
\begin{equation*}
    c_{\text{fit}}(t)=
    \epsilon^\dagger
    \exp[(-\mathrm{i} K-\Gamma)t]
    \epsilon.
\end{equation*}
$K$ is a dense $6\times 6$ Hermitian matrix, $\epsilon$ is a $6$-vector, and $\Gamma=\mathrm{diag}(\gamma_1,\ldots,\gamma_6)$ is a diagonal matrix. These parameters are optimized by minimizing the error between the original and fitted BCFs for $t\in[0,100~\mathrm{fs}]$. A comparison between the original BCF and the fitted BCF is shown in Supplemental Fig.~\ref{fig:compare-bcf}.

\begin{figure}[h]
    \centering
    \includegraphics[width=0.7\textwidth]{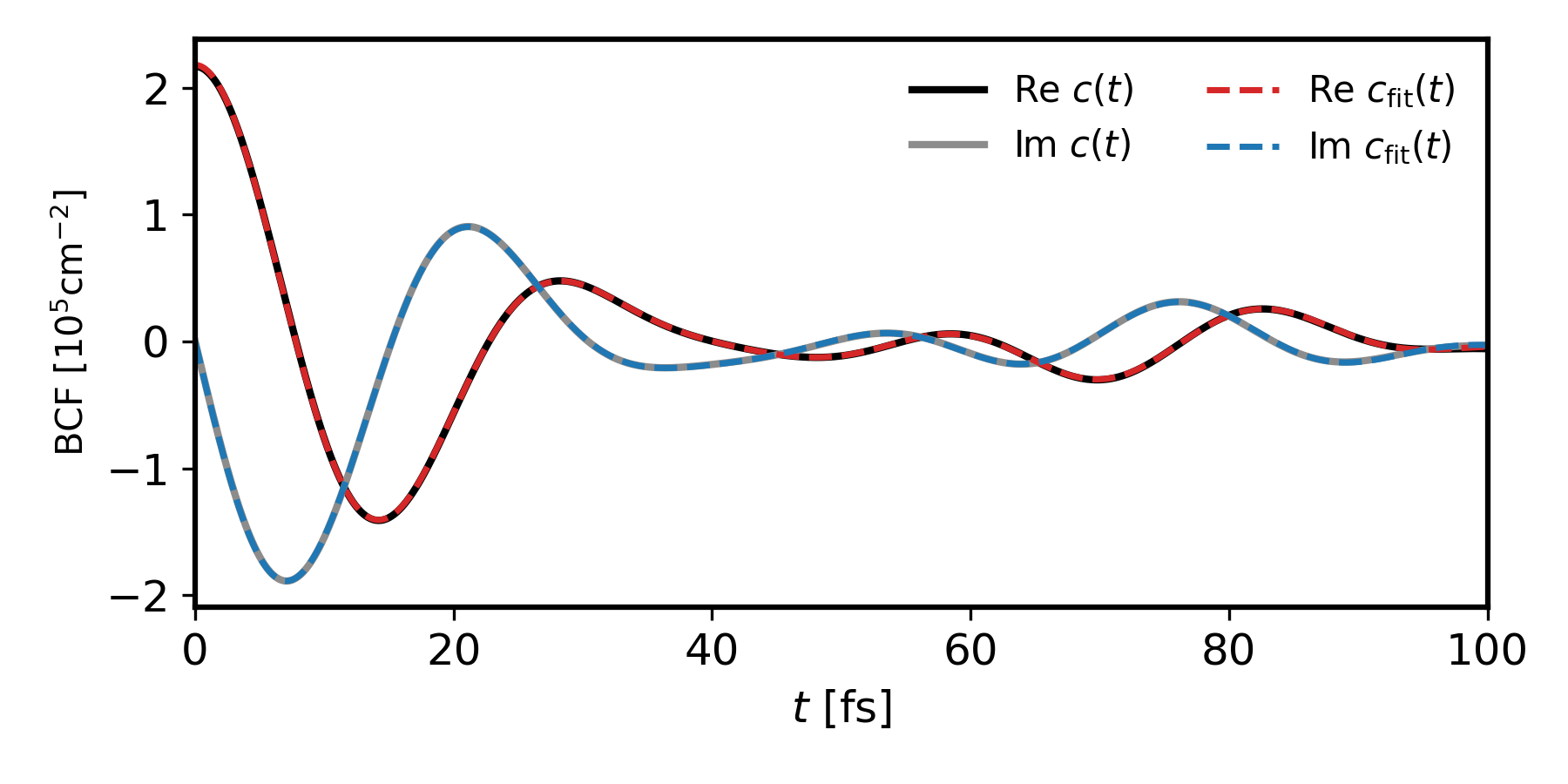}
    \caption{Comparison between the original BCF and the fitted BCF.}
    \label{fig:compare-bcf}
\end{figure}

The optimal parameters are:
\begin{equation*}
\begin{aligned}
\epsilon &=
(12.189,\ 238.799,\ 103.753,\ 360.181,\ 85.506,\ 114.003)~\mathrm{cm}^{-1},\\
\gamma &=
(873.185,\ 0,\ 0,\ 0,\ 0,\ 0)~\mathrm{cm}^{-1}.
\end{aligned}
\end{equation*}

Writing $K=K_{\text{R}}+\mathrm{i} K_{\text{I}}$, the optimal $K_{\text{R}}$ and $K_{\text{I}}$ are:
\begin{equation*}
K_{\text{R}}=
\begin{pmatrix}
318.03 & -238.13 & -205.51 & 611.88 & -180.31 & 123.05 \\
-238.13 & -76.10 & 294.49 & 570.49 & 88.62 & 133.35 \\
-205.51 & 294.49 & 1131.76 & 108.90 & 126.29 & -395.79 \\
611.88 & 570.49 & 108.90 & 601.36 & -64.34 & 76.93 \\
-180.31 & 88.62 & 126.29 & -64.34 & 602.00 & 150.21 \\
123.05 & 133.35 & -395.79 & 76.93 & 150.21 & 318.22
\end{pmatrix}
\end{equation*}
and
\begin{equation*}
K_{\text{I}}=
\begin{pmatrix}
0.00 & -69.88 & -398.04 & 266.19 & -97.23 & -121.94 \\
69.88 & 0.00 & 198.31 & -90.66 & -15.76 & -262.04 \\
398.04 & -198.31 & 0.00 & -155.34 & 202.20 & 512.79 \\
-266.19 & 90.66 & 155.34 & 0.00 & 156.43 & 77.03 \\
97.23 & 15.76 & -202.20 & -156.43 & 0.00 & -98.37 \\
121.94 & 262.04 & -512.79 & -77.03 & 98.37 & 0.00
\end{pmatrix}.
\end{equation*}
All entries of $K_{\mathrm R}$ and $K_{\mathrm I}$ are in $\mathrm{cm}^{-1}$.
The BCF fitting is numerically accurate. The fitting error is negligible on the $100~\mathrm{fs}$ benchmark window compared with the SPA simulation error from using $R=1$.

\subsection{SPA simulations}

The compressed BChl bath is shared by all sites with $R=1$. The six auxiliary modes are truncated with local dimensions
\begin{equation*}
    (5,5,5,5,5,5),
\end{equation*}
giving a global-bath Hilbert-space dimension of $15\,625$.

For each trajectory propagated by the SPA simulation, stochastic phase factors are sampled independently, and the initial bath state is fixed to 
\begin{equation*}
    |0,0,0,0,0,0\rangle .
\end{equation*}
Although the auxiliary modes are initialized in their vacuum, the fitted coupled-Lindblad parameters ensure that their BCF reproduces the thermal BCF of the original $300~\mathrm{K}$ vibrational environment.

The BChl simulations use the Monte Carlo wavefunction propagator with a time step of $\delta t=0.1~\mathrm{fs}$ and single-precision complex arithmetic. Populations are saved every $1~\mathrm{fs}$, and the final time is $100~\mathrm{fs}$. The results reported in Fig.~3(b) of the main text are obtained by averaging over 128 independently propagated trajectories for each of the homogeneous and inhomogeneous chains.

\newpage
\section{Details of SPA simulations of Rubrene crystal}
\subsection{Model Parameters}

The rubrene calculations use the one-dimensional Holstein model described in the main text.  The constant site energy is omitted, the nearest-neighbor hopping is $V=83~\mathrm{meV}$, and each molecular site is coupled to nine bosonic modes.  The mode frequencies $\omega_m$, partial reorganization energies $\lambda_m$, and dimensionless couplings $g_m$ are listed in Table~\ref{tab:couple}.  The couplings are computed as $g_m=(\lambda_m/\omega_m)^{1/2}$, so that the total local reorganization energy is $\lambda=\sum_{m=1}^9 \lambda_m\approx 73~\mathrm{meV}$.

\begin{table}[h]
    \centering
    \begin{ruledtabular}
    \begin{tabular}{cccccccccc}
        $m$ & 1 & 2 & 3 & 4 & 5 & 6 & 7 & 8 & 9\\
        $\omega_m / \mathrm{cm}^{-1}$  & 84.000 & 214.133 & 631.692 & 1002.141 & 1205.567 & 1351.178 & 1364.026 & 1535.332 & 1594.000 \\
        $\lambda_m / \mathrm{cm}^{-1}$ & 77.797 & 29.492 & 39.661 & 39.661 & 27.458 & 131.695 & 23.390 & 63.559 & 155.593 \\
        $g_m$ & 0.962 & 0.371 & 0.251 & 0.199 & 0.151 & 0.312 & 0.131 & 0.204 & 0.312 \\
    \end{tabular}
    \end{ruledtabular}
    \caption{Bath parameters for rubrene used in the simulations.}
    \label{tab:couple}
\end{table}

\begin{figure}[b]
    \centering
    \includegraphics[width=0.8\textwidth]{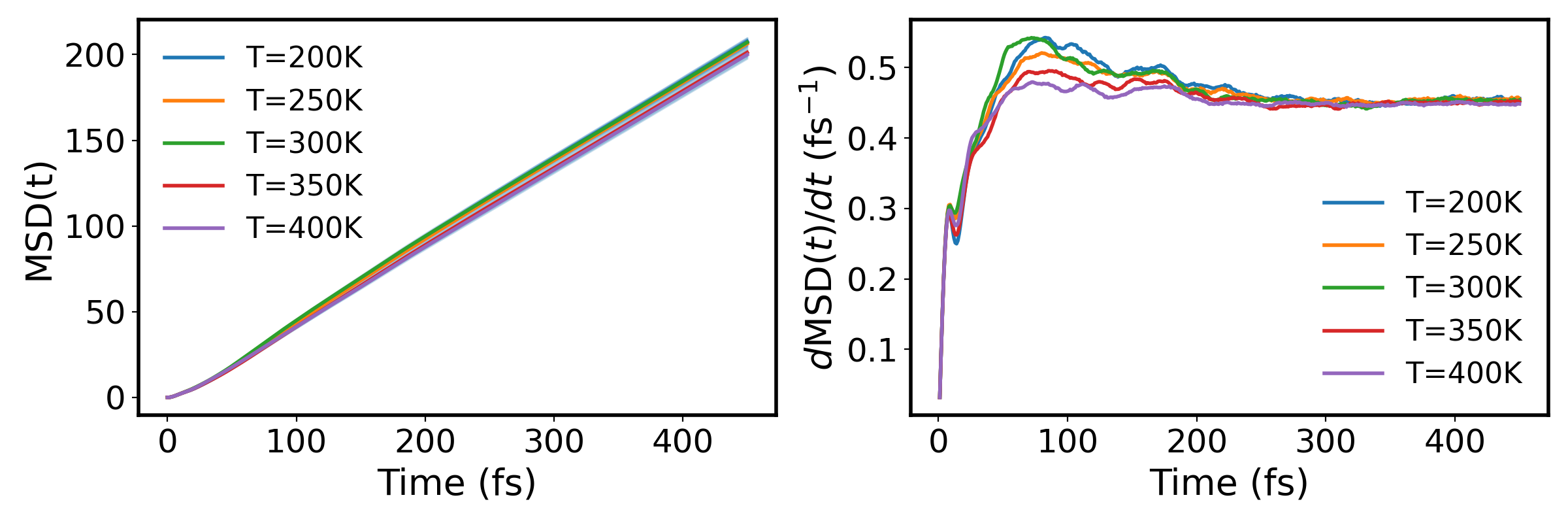}
    \caption{MSD and the slope of the MSD as functions of time for the SPA rubrene simulation.}
    \label{fig:sm-msd-comparison}
\end{figure}

\subsection{SPA simulations}

Because the rubrene bath contains only nine undamped modes, bath compression is not necessary.  The SPA calculation uses a single shared copy of the nine-mode bath ($R=1$).  The nine auxiliary modes are truncated with local dimensions
\begin{equation*}
    (12,6,4,3,3,4,3,3,4),
\end{equation*}
corresponding to a global-bath Hilbert-space dimension of $373\,248$.
 
The production SPA simulations use $L=200$, open boundary conditions, single-precision complex arithmetic, $\delta t=0.1~\mathrm{fs}$, and samples saved every $1~\mathrm{fs}$ up to $450~\mathrm{fs}$.  The initial electronic state is localized on the center site, $|100\rangle$ in one-based indexing, and the initial bath state is sampled from the thermal distribution at the target temperature.  For the mobility calculation we use $T=200,250,300,350,$ and $400~\mathrm{K}$, with 100 independently propagated trajectories per temperature.

The position operator used to define the mean-squared displacement is the site-index displacement operator
\begin{equation*}
    \hat n=\sum_i (i-i_0)|i\rangle\langle i|,
\end{equation*}
where $i_0=100$ is the initial site and $\bar\rho_{\mathrm{s}}=M^{-1}\sum_r\hat\rho_{\mathrm{s},r}$ is the ensemble-averaged state. Before boundary effects, reflection symmetry gives $\operatorname{Tr}[\bar\rho_{\mathrm{s}}(t)\hat n]=0$, so
\begin{equation*}
    \mathrm{MSD}(t)=\operatorname{Tr}[\bar\rho_{\mathrm{s}}(t)\hat n^2]
    =\frac{1}{M}\sum_{r=1}^{M}\operatorname{Tr}[\hat\rho_{\mathrm{s},r}(t)\hat n^2].
\end{equation*}
The plotted mobility is obtained from the ensemble-averaged MSD through
\begin{equation*}
    \mu_{\text{b}}(T)=
    \frac{eD^2}{2k_{\mathrm B}T}
   \lim_{t\to\infty} \frac{\mathrm{d}}{\mathrm{d}t}\mathrm{MSD}(t).
\end{equation*}
Here, $e$ is the elementary charge, and $D=7.19~\text{\AA}$ is the intermolecular distance.

The MSD and the slope of the MSD, obtained by averaging over 100 trajectories, are reported in Supplemental Fig.~\ref{fig:sm-msd-comparison}. The slope reaches an approximately constant plateau over the diffusive time window before finite-size effects limit the spreading. We use the average slope of the MSD over the interval $300$--$350~\mathrm{fs}$ to represent the long-time limit of $\frac{\mathrm{d}}{\mathrm{d}t}\mathrm{MSD}(t)$.

\subsection{Ehrenfest simulations}

The Ehrenfest curve shown in Fig.~3(d) of the main text was generated for the
same $L=200$ rubrene chain, hopping $V=83~\mathrm{meV}$, nine-mode bath, and
temperature range used in the SPA calculation.  In this mean-field
mixed quantum--classical simulation~\cite{ehrenfest1927bemerkung}, every site
is coupled self-consistently to nine classical harmonic oscillators with the
frequencies and couplings in Table~\ref{tab:couple}.  The oscillator positions
and velocities are sampled from the classical thermal distribution at the
target temperature; at the initially occupied site, the mean initial position
is displaced to the corresponding polaron equilibrium position.   

For each temperature, 128 independent trajectories are propagated with a
$0.02~\mathrm{fs}$ time step for $450~\mathrm{fs}$, and the populations and MSD
are recorded every $1~\mathrm{fs}$.  The MSD is evaluated separately for every
trajectory using the definition above and then ensemble averaged.  The
long-time limit of the slope of the MSD is obtained by averaging the
$300$--$450~\mathrm{fs}$ interval.  The mobility is calculated with $D=7.19~\text{\AA}$.

\subsection{DIQCD simulations}

The data-informed quantum--classical dynamics (DIQCD) model is trained
separately at each temperature against quantum dynamics of an effective two-level system
coupled to the nine-mode rubrene bath~\cite{Xie2026DIQCD}.  These
training data are generated through unitary dynamics with the same per-mode bosonic cutoffs as the SPA
calculation,
\begin{equation*}
    (12,6,4,3,3,4,3,3,4),
\end{equation*}
corresponding to a local-bath Hilbert-space dimension of $373\,248$. The time step is $0.1~\mathrm{fs}$, and each unitary trajectory is saved at $1~\mathrm{fs}$ intervals up to $100~\mathrm{fs}$.

\begin{table}[b]
    \centering
    \begin{ruledtabular}
    \begin{tabular}{ccccccc}
        $T$ (K) & TD-DMRG & SPA & FGR & Boltzmann & DIQCD & Ehrenfest \\
        200 & 65.864 & 67.629 & 88.409 & 160.084 & 65.075 & 242.232 \\
        250 & 53.116 & 54.232 & 66.875 & 114.286 & 59.594 & 157.520 \\
        300 & 46.034 & 44.612 & 52.779 & 88.235 & 46.418 & 112.301 \\
        350 & 38.102 & 38.294 & 42.942 & 71.008 & 33.126 & 88.534 \\
        400 & 33.569 & 33.489 & 35.750 & 57.983 & 35.049 & 68.467 \\
        \hline
        MaxAE &   & 1.764 & 22.545 & 94.220 & 6.478 & 176.368 \\
    \end{tabular}
    \end{ruledtabular}
    \caption{The rubrene $b$-axis mobilities in $\mathrm{cm}^2\,\mathrm{V}^{-1}\,\mathrm{s}^{-1}$.  The final row gives the maximum absolute error (MaxAE) relative to TD-DMRG over the five temperatures.}
    \label{tab:rubrene-mobility-reference}
\end{table}

The training data are obtained by averaging over 128 independently sampled trajectories. 
At each temperature, the DIQCD model parameters are optimized against the ensemble mean and standard deviation of observables of the effective two-level system. The choice of the DIQCD model parameters and the training procedure are consistent with those in Ref.~\cite{Xie2026DIQCD}. Details are omitted here.

The trained DIQCD model is then applied to the $L=200$ rubrene chain using 512
independent trajectories, a $0.05~\mathrm{fs}$ propagation step, and a
$1~\mathrm{fs}$ output interval for the same $450~\mathrm{fs}$ duration.  The
MSD and mobility are evaluated exactly as for the Ehrenfest data: the
finite-difference MSD slope is averaged over the
$300$--$450~\mathrm{fs}$ time interval. $D=7.19~\text{\AA}$ is
used in the mobility conversion.

The setup of Ehrenfest and DIQCD simulations in this work differs from that of
Ref.~\cite{Xie2026DIQCD}, where shorter propagation times, smaller bosonic
cutoffs, and $D=7~\text{\AA}$ led to overestimated mobilities. In particular, the DIQCD mobility reported there was calculated using the finite-difference slope of the MSD over the interval $50$--$100~\mathrm{fs}$, which leads to an unconverged estimate of the long-time limit of the slope. With these parameters corrected, DIQCD agrees substantially better with TD-DMRG than previously reported.

The numerical values of the mobilities reported in Fig.~3(d) of the main text are listed in Table~\ref{tab:rubrene-mobility-reference}. The maximum absolute error (MaxAE) relative to TD-DMRG is also reported. Among all methods, SPA achieves the smallest MaxAE and agrees consistently with TD-DMRG.

\bibliographystyle{apsrev4-2}
\bibliography{references}